\documentclass{article}

\usepackage[preprint]{neurips_2026}

\usepackage{xspace}
\usepackage{xcolor}
\usepackage{soul}
\usepackage{mathtools}
\usepackage{microtype}
\usepackage{subcaption}
\usepackage[export]{adjustbox}
\usepackage{relsize}
\usepackage[makeroom]{cancel}
\usepackage{standalone}
\usepackage{wrapfig2}

\usepackage{tikz}
\usetikzlibrary{perspective, fit, calc, positioning, shapes, graphs, trees, patterns, 3d, shadings}
\tikzset{
  pointlabel/.style={inner sep=1pt, fill=white, above=1mm, rounded corners=1pt, font=\tiny, minimum size=7pt},
}

\usepackage{forest}
\usepackage{tikz-3dplot}

\usepackage{hyperref}
\usepackage{url}

\definecolor{darkblue}{rgb}{0, 0, 0.5}
\hypersetup{colorlinks=true, citecolor=darkblue, linkcolor=darkblue, urlcolor=darkblue}

\makeatletter
\long\def\@makefntext#1{%
  \parindent 1em%
  \noindent
  \hb@xt@1.8em{\hss\@makefnmark\hspace{0.5em}}#1%
}
\makeatother

\usepackage[capitalize, noabbrev]{cleveref}
\crefname{appendix}{Appendix}{Appendices}
\Crefname{appendix}{Appendix}{Appendices}
\usepackage{gensymb}
\usepackage{xstring}
\usepackage[outline]{contour}% http://ctan.org/pkg/contour

\usepackage[small]{complexity}
\newclass\ER{\exists\mathbb{R}}

\usepackage{amsthm}
\usepackage{thmtools}
\usepackage{thm-restate}
\theoremstyle{plain}

\newtheorem{theorem}{Theorem}
\newtheorem*{theorem*}{Theorem}

\crefname{conjecture}{Conjecture}{Conjectures}
\Crefname{Conjecture}{Conjecture}{Conjectures}

\theoremstyle{definition}
\newtheorem{definition}{Definition}

\usepackage{booktabs}
\usepackage{tabularx}
\usepackage{siunitx}
\usepackage[varbb]{newtxmath}
\usepackage{pgfplots}
\usepgfplotslibrary[fillbetween, groupplots, statistics]
\pgfplotsset{
  compat=newest,
  grid=both,
  grid style={draw=white},
  axis line style={draw=white},
  tick style={draw=none},
  axis background/.style={fill=black!10},
  legend style={draw=none, fill=black!10},
  legend cell align=left,
}

\usepackage{amsmath,bm}

\def\eqref#1{equation~\ref{#1}}
\def\1{\bm{1}}

\def\vh{{\bm{h}}}

\def\vell{{\bm{\ell}}}

\def\vw{{\bm{w}}}
\def\vx{{\bm{x}}}

\def\mA{{\bm{A}}}

\def\mD{{\bm{D}}}

\def\mI{{\bm{I}}}

\def\mQ{{\bm{Q}}}
\def\mR{{\bm{R}}}

\def\mU{{\bm{U}}}
\def\mV{{\bm{V}}}
\def\mW{{\bm{W}}}

\DeclareMathAlphabet{\mathsfit}{\encodingdefault}{\sfdefault}{m}{sl}
\SetMathAlphabet{\mathsfit}{bold}{\encodingdefault}{\sfdefault}{bx}{n}

\renewcommand{\R}{\mathbb{R}}

\newcommand{\Cov}{\mathrm{Cov}}
\DeclareMathOperator*{\argsort}{arg\,sort}

\DeclareMathOperator*{\col}{col}

\newcommand{\argsortable}[1]{\mathcal{B}\left(#1\right)}

\definecolor{clr1}{RGB}{228,26,28}
\definecolor{clr2}{RGB}{55,126,184}
\definecolor{clr3}{RGB}{77,175,74}
\definecolor{clr4}{RGB}{152,78,163}

\tikzset{
    define colours/.code n args={1}{
    
    \definecolor{clr0}{RGB}{128,128,128,128};
    \definecolor{clr1}{RGB}{228,26,28};
    \definecolor{clr2}{RGB}{55,126,184};
    \definecolor{clr3}{RGB}{77,175,74};
    \definecolor{clr4}{RGB}{152,78,163};
    \definecolor{clr5}{RGB}{255,127,0};
    \definecolor{clr6}{RGB}{247,129,191}
    \definecolor{cclr1}{RGB}{0,0,0};
    \definecolor{cclr2}{RGB}{77,175,74};
    \definecolor{cclr3}{RGB}{55,126,184};
    \definecolor{cclr4}{RGB}{66,150,129};
    \definecolor{cclr5}{RGB}{228,26,28};
    \definecolor{cclr6}{RGB}{152,100,51};
    \definecolor{cclr7}{RGB}{141,76,106};
    \definecolor{cclr8}{RGB}{120,109,95};
  },
  mycolour/.style={
         define colours={1}
  },
  color node/.style={%
      minimum height=1.5cm, minimum width=2.5cm, draw, rectangle, inner color=#1!20, outer color=#1!40, align=center, outer sep=.08cm
  },
  signvec/.style={%
      rounded corners=0.2cm, fill=white, inner sep=1pt, opacity=.75, align=center
  },
  vertex/.style={%
      circle, inner sep=0pt, outer sep=0pt, minimum size=1.8mm
  },
  set@com@col/.style={},set@com@col@aryarg/.style={column #1/.style={set@com@col}},
  set@com@row/.style={},set@com@row@aryarg/.style={row #1/.style={set@com@row}},
  set common column/.style 2 args={set@com@col/.style={#2}, set@com@col@aryarg/.list={#1}},
  set common row/.style 2 args={set@com@row/.style={#2}, set@com@row@aryarg/.list={#1}},
}

\pgfmathsetmacro{\svspacing}{0.35}

\newcommand{\ie}{i.e.\@\xspace}

\newif\ifnocomments
\nocommentstrue
\ifnocomments
  \newcommand{\andreas}[1]{}
  \newcommand{\ye}[1]{}
  \newcommand{\ag}[1]{}
  \newcommand{\mf}[1]{}
  \newcommand{\swabha}[1]{}
  \newcommand{\xiang}[1]{}
\else
  \newcommand{\andreas}[1]{%
  \par%
  \noindent\fbox{%
  \parbox{\dimexpr\linewidth-2\fboxsep-2\fboxrule}{Andreas: #1}%
  }%
  }

  \newcommand{\ye}[1]{\textcolor{purple}{(Yanai:) #1}}
  \newcommand{\ag}[1]{\textcolor{orange}{(Andreas:) #1}}
  \newcommand{\mf}[1]{\textcolor{olive}{(Matt:) #1}}
  \newcommand{\swabha}[1]{\textcolor{cyan}{(Swabha:) #1}\xspace}
  \newcommand{\xiang}[1]{\textcolor{blue}{(Xiang:) #1}}
\fi

\title{Mark My Words: Toward Unforgeable Language Model Signatures via Token Rankings}
\title{Token Rankings are Securely-Disclosable (and Maybe Unforgeable) Language Model Signatures}
\title{Let Them Have Token Rankings}
\title{Token Rankings are Unforgeable Language~Model~Signatures}
\author{
  Matthew Finlayson\thanks{Correspondence to \texttt{mfinlays@usc.edu}} 
  \\ University of Southern California
  \And Andreas Grivas \\ University of Edinburgh
  \And Xiang Ren \\ University of Southern California
  \And Swabha Swayamdipta \\ University of Southern California
}

\begin{document}

% \ifcolmsubmission
% \linenumbers
% \fi

\maketitle

\begin{abstract}
  Language model parameters are known to impose unique 
(to each model) geometric constraints on their logit outputs,
which serves as a signature that identifies the model,
but also leaks the model's final layer parameters
when an API distributes logits.
We investigate more restrictive APIs that expose token rankings 
(i.e., their ordering by probability, but not the probability values)
and find that rankings also constitute a signature: every model has a unique set of feasible top-$k$ rankings for sufficiently large $k$.
Furthermore, the ranking signature is the first known (polynomially) unforgeable signature,
since finding a model with the same set of feasible rankings
is \NP-hard.
On the security front, we find that token rankings are already sufficient to approximately steal the final layer of the model, similar to logits, though the approximation is too coarse to forge the signature, and can be effectively countered by restricting the API to top-$k$ tokens with sufficiently small $k$.
Since the top-$k$ required to present the model signature is generally smaller than the $k$ required to prevent stealing,
it is possible for an API to present an unforgeable signature without leaking model parameters.

\end{abstract}

\section{Introduction}
\label{sec:intro}

Recent work has explored using language model outputs as a signature~\citep{yang2025fingerprintlargelanguagemodels,finlayson2024logits,finlayson2026every}.
In particular, these signatures are naturally occurring, unique geometric properties of the model's outputs (e.g., token logits)
that unambiguously reveals which model generated an output.
These signatures have promising applications for model security, accountability, and forensics, 
but suffer from a few drawbacks.
Namely, the signatures generally rely on access to token probabilities, 
these token probabilities can be used to steal model parameters via an API,
and stolen parameters can in turn be used to forge the signature, 
i.e., construct a new model with the same geometric constraints.
Token probability access has become increasingly rare for language model APIs, 
presumably to protect against attacks like parameter stealing and prompt inversion~\citep{carlini2024stealing,finlayson2024logits,  ICLR2024_999fcab9,nazir2025better}.

We propose a new geometric language model signature based on token rankings, i.e., the ordering of tokens by probability, but not the probabilities themselves.
In particular, the set of token rankings that a language model produces are generally unique to that model: for every ranking the model produces, it is overwhelmingly likely that no other model can produce it.
This is due to the low-rank unembedding layer of the model---known as the softmax bottleneck~\citep{Yang2018}---which restricts the model's outputs to a proportionally tiny subset of the possible rankings.
We characterize theoretically and verify empirically the uniqueness of the ranking signature.

Furthermore, we prove that the ranking signature is unforgeable, since creating another model with the same set of rankings is computationally hard. 
We accomplish this by showing that finding a rank-$d$ matrix that can produce a set of rankings is \ER-hard where \(\NP\leq\ER\leq\PSPACE\). 
Our proof reduces a known \ER-hard problem to the ranking fitting problem.

Such a signature is useful for model forensics.
For instance, suppose you come across a repository purportedly containing leaked parameters from a private language model.
% How might you prove whether the leaked parameters are legitimate?
% If the official model has a public API
% and gives full control over the model input (i.e., has no hidden system message),
% you might check that the leaked and official models produce identical outputs on a variety of inputs, concluding that the models are not the same if they diverge. 
% Unfortunately, this test fails if the API transforms the input in some unobserved way, such as prepending a hidden system message that perturbs the behavior of the model.
% \xiang{This provides an interesting hook but I'm a bit concerned that it bias people's view on what the paper is about?}
% In this paper, we show that language models have a unique final layer signature 
If the private language model has an API that distributes token rankings, you can use the ranking signature to prove that either the leaked weights are fraudulent, or the model leaker has access to the true model weights.
% Crucially, 
% the method requires only that the API provides (top-$k$) token rankings (the ordering of tokens by probability, but not the probabilities themselves),
% and does not require knowing the input to the model API (e.g., when the API uses a secret system prompt).
% \xiang{We should speak out about the relation of model info and their rankings, and the utilities of this finding.}
%  Motivation: make clear that the signature depends only on the output layer
Concretely, it suffices to obtain a handful of token rankings $r_1,r_2,\ldots,r_n$ from the API and test whether they are feasible for the model, \ie, whether there exists $\vx_i \in \R^d$ such that $\argsort(\mW\vx_i)=r_i$ for the model's unembedding matrix $\mW$.
% and some $\vx\in\R^d$.
Checking a ranking's feasibility is efficient, 
% takes polynomial time, is a convex optimization problem that can be solved efficiently via linear programming.
and if the rankings are infeasible, then the weights are certainly not the true model weights.
On the other hand, if the rankings are feasible for the leaked model, then the leaker must have access to the true model weights because the ranking signature is both unique and unforgeable.

% Thus the feasible set of rankings---and by extension, top-\(k\) rankings---constitute a signature for the model.
% We empirically demonstrate the uniqueness of the signature:
% even the smallest perturbation to the model weights completely changes the signature.
% We then prove the hardness of forging the signature by showing that finding a rank-$d$ matrix that can produce a set of rankings is \ER-hard where \(\NP\leq\ER\leq\PSPACE\). 
% Our proof reduces a known \ER-hard problem to the ranking fitting problem.

Upon closer inspection of token rankings and the information they contain, 
we discover an important caveat for API providers who wish to provide token rankings:
APIs that expose full token rankings are susceptible to approximate final layer (\(\mW\)) stealing attacks.
% \textit{a la} \citet{carlini2024stealing,finlayson2024logits}.
We motivate the attack theoretically, and verify that it works empirically on open source models,
characterizing the sample complexity of the attack and the number \(k\) for the attack to work for top-\(k\) rankings.
Our findings overturn the prevailing assumption that unembedding stealing attacks require logprobs from the API.
While this attack is too coarse to steal the ranking signature,
it successfully recovers model parameters within a small margin.
Fortunately, there is a sweet spot for top-\(k\) rankings where \(k\) is small enough to prevent parameter stealing attacks, but large enough for the rankings to work as a signature.  

Our ranking signature is a direct improvement over prior work on geometric model signatures in terms of API assumptions, unforgeability guarantees, and protection against parameter stealing attacks.
Where previous geometric signatures require APIs to provide token probabilities, 
can be forged in polynomial time,
and inevitably leak the unembedding matrix of the model; 
the ranking signature requires only token rankings, 
is computationally hard to forge,
and enables the provider to present the signature without exposing model parameters.

\begin{figure}
  \centering
  \begin{tikzpicture}[
  font=\small,
  panel/.style={
    draw=black!22,
    rounded corners=2pt,
    fill=black!2,
    inner sep=4pt,
    minimum height=3.05cm,
    minimum width=4.35cm
  },
  title/.style={font=\footnotesize, anchor=north west},
  note/.style={font=\scriptsize, align=center},
  tiny note/.style={font=\tiny, align=center},
  token/.style={font=\tiny, inner sep=1pt, fill=white},
  box/.style={
    draw=black!45,
    rounded corners=1.5pt,
    fill=white,
    inner xsep=3pt,
    inner ysep=2pt,
    align=left,
    font=\scriptsize
  },
  arrow/.style={->, thick, black!75},
  softarrow/.style={->, semithick, black!55},
  dot/.style={circle, fill=black, inner sep=1.05pt}
]

\node[panel] (rankpanel) at (0,0) {};
\node[title] at ([xshift=2pt,yshift=-2pt]rankpanel.north west) {1. Token rankings};

\coordinate (can) at (-1.75,-0.0);
\coordinate (dont) at (-1.65,0.72);
\coordinate (will) at (-0.88,0.28);
\coordinate (must) at (+0.20,-0.32);
\coordinate (hstart) at (-2.05,-1.02);
\coordinate (hend) at (-0.05,0.68);

\draw[arrow] (hstart) -- (hend) node[above right=-1pt, font=\tiny] {$\vh$};
\foreach \p in {can,dont,will,must} {
  \draw[black!25] (\p) -- ($(hstart)!(\p)!(hend)$);
  \node[dot] at (\p) {};
}
\node[token,below=0.5mm of can] {can};
\node[token,above=0.5mm of dont] {don't};
\node[token,above=0.5mm of will] {will};
\node[token,below=0.5mm of must] {must};
\node[tiny note, align=left] at (-1.18,-1.22) {unembedding rows\\projected onto \(h\)};

\node[box, text width=1.10cm] (ranking) at (1.28,0.05) {
  \begin{tabular}{@{}rl@{}}
  1 & must\\
  2 & will\\
  3 & don't\\
  4 & can
  \end{tabular}
};
\node[tiny note] at (1.28,-1.08) {API reveals order,\\not probabilities};
% \draw[softarrow] (0.10,0.05) -- (ranking.west);

\node[panel, right=0.20cm of rankpanel] (sigpanel) {};
\node[title] at ([xshift=2pt,yshift=-2pt]sigpanel.north west) {2. Feasible sets are signatures};

\node[draw=black!25, rounded corners=2pt, minimum width=3.35cm, minimum height=1.55cm] (universe) at ([yshift=0.22cm]sigpanel.center) {};
\node[tiny note, anchor=north east] at ([xshift=-1pt,yshift=1pt]universe.north east) {all \(v!\) rankings};

\foreach \x/\y in {-1.30/0.40,-1.08/-0.30,-0.68/0.57,-0.36/-0.45,0.20/0.50,0.55/-0.42,0.88/0.08,1.22/-0.16} {
  \fill[black!18] ($(universe.center)+(\x,\y)$) circle[radius=0.55pt];
}

\draw[fill=blue!20, draw=blue!60!black, thick] ($(universe.center)+(-0.54,0.03)$) ellipse (0.55cm and 0.34cm);
\draw[fill=orange!25, draw=orange!70!black, thick] ($(universe.center)+(0.54,0.03)$) ellipse (0.55cm and 0.34cm);
\node[tiny note] at ($(universe.center)+(-0.54,0.03)$) {\(\mathcal R(\mW_1)\)};
\node[tiny note] at ($(universe.center)+(0.54,0.03)$) {\(\mathcal R(\mW_2)\)};
\node[note, text width=3.20cm] at ([yshift=-1.17cm]sigpanel.center) {nearly disjoint ranking sets identify the model};

\node[panel, right=0.20cm of sigpanel] (hardpanel) {};
\node[title] at ([xshift=2pt,yshift=-2pt]hardpanel.north west) {3. Ranking signature is unforgeable};

\node[box, font=\tiny, text width=3.05cm] (verify) at ([yshift=0.55cm]hardpanel.center) {
  Verify leaked weights: \(r\in\mathcal R(\widehat{\mW})\)? \quad Easy
};
\node[box, font=\tiny, text width=3.05cm, below=0.12cm of verify] (forge) {
  Forge signature: find \(\mW'\) with same rankings \quad \(\ER\)-hard
};
\node[box, font=\tiny, text width=3.05cm, below=0.12cm of forge] (leak) {
  Full rankings leak approximate \(\mW\); 
  top-\(k\) gives safer regime
};
\end{tikzpicture}
  \caption{
    We consider the setting where an language model API reveals token rankings, but not probabilites.
    The rankings feasible for a particular unembedding matrix form a model-specific subset of all possible rankings, yielding a geometric signature that can be checked efficiently against candidate weights.
    We prove that forging this signature is computationally hard, while also showing that full rankings can leak approximate final-layer information, motivating top-\(k\) rankings as the safer disclosure regime.
  }
\end{figure}

\section{Related Work}

Language model signatures are comparable to language model fingerprints and watermarks~\citep{cao2021ipguard,ye2026securing,DBLP:conf/icml/KirchenbauerGWK23} with some notable differences.
First, fingerprints and watermarks usually require the model or API provider to implement some type of modification to either the training or inference process. On the other hand, ranking signatures are naturally occurring, and all modern language models have them.
Second, fingerprints and watermarks are generally designed for robustness, i.e., trivial changes to the model weights or outputs do not remove them. In contrast, the ranking signature is extremely delicate. 
Any changes to the model weights immediately changes the signature completely.
Another difference is that the ranking signature requires access to token rankings but does not require interactive API access, while many fingerprints only need text outputs but may require interactive API access.

Similar to our unforgeability result for the ranking signature,
\citet{finlayson2026every} 
show that ellipse-based language model signatures are resistant to forgery,
since the best known algorithm for finding a language model's output ellipse takes \(\mathcal{O}(d^{2\omega})\) time\footnote{
  Here, \(2\leq \omega<3\) is the theoretical lower bound for the exponent for the time complexity~\(\mathcal{O}(n^\omega)\) of multiplying two \(n\times n\) matrices. For known practical algorithms, it is not much less than 3.
}
and requires a costly \(\mathcal{O}(d^3)\) queries from typical language model API.
This cost is high in practice, but remains polynomial.
In contrast, we show that the ranking signature is \ER-hard to forge.
Our definition of signature forgery differs from theirs,
which requires only producing a fake model output that satisfies the geometric constraint of the model parameters. 
We instead require the attacker to produce fake parameters that impose the same geometric constraints as the true parameters.
% \xiang{can we say "product rankings that all satisfy geometric constraints of the true parameters"?}
% \mf{I think no, since they explicitly need to produce parameters, not rankings}
Forging parameters is sufficient but not necessary for forging outputs.
Indeed, the authors
only show parameter forgery resistance, and only speculate about output forgery resistance. 
The forgeability of outputs remains an open question for both the ellipse and ranking signatures.
% Nevertheless, we investigate some variants of the output forgeability question,
% and prove that multi-label classifiers have unforgeable signatures under the setting where the forger must classify model outputs
% % \ag{how can multi-label classifiers classify rankings?} 
% % \mf{Nice catch. Fixed.} 
% chosen by an adversary.

\section{Token rankings are language model signatures}
\label{sec:sig}

% - What are rankings
% - What is the ranking signature
% - Ranking signatures are unique to models
% - Perturbations change the signature

We will now introduce the ranking signature.
Starting with a formal definition of feasible rankings, 
we will make a statistical argument that a model's set of feasible rankings (and top-\(k\) rankings) is unique to the model, and verify it empirically.
The uniqueness of the feasible rankings means that the set of rankings functions as a signature for the model, 
since no other model will have the same set of rankings by chance.

To formally define token rankings, consider a typical language model with embedding size~\(d\), which is less than its vocabulary size~\(v\), so it has an unembedding matrix~\(\mW\in\R^{v\times d}\) (sometimes called the language model head).
At every timestep, the model outputs a contextualized embedding $\vh$, 
which assigns a logit score~\(\vell_i\) from \(\vell=\mW\vh\) to each token~\(i=1,\ldots,v\) in the vocabulary.
Each logit vector~\(\vell\) defines a token ordering~\(r=\argsort(\vell)\), which we call a \textit{token ranking}.
We additionally define top-$k$ rankings
as a truncated ranking $r$, i.e., $(r_1,r_2,\ldots,r_k)$.
Note that the argsort of the logits is the same as the argsort of the (log-)probabilities, since the (log-)softmax function is monotonic, preserving the order. 
A~ranking~$r$ is called \textit{feasible} if there exists some \(\vh\in\R^d\) such that \(\argsort(\mW\vh)=r\), i.e., it is feasible for the model to output the ranking.\footnote{It is possible that a ranking $r$ is feasible for \(\mW\), but the model cannot (or never happens to) produce a \(\vh\) such that \(r=\argsort(\mW\vh)\). 
}
%
% \xiang{for any input to the LM, not exactly same as ``any \(\vh\)"?}
%
% \ag{Can maybe spell this out explicitly: Importantly, if there is no $\vh$ such that a given ranking is feasible, there is also no input to the LLM that can produce this ranking.}
%
While there are \(v!\) possible rankings,
a model with \(d<v\) will have a much smaller set of feasible rankings.

We call the full set of feasible rankings for a model its \textit{ranking signature}, 
so called because the set is highly unique---any single ranking is almost certainly uniquely feasible for that model and no other.
% \xiang{1. "any set of feasible" or the "entire set"? 2. "almost certainly" seems too strong?}
The uniqueness of the feasible rankings is a result of sparsity;
\citet{Cover1967} shows that \(v\times d\) matrices have at most \(\mathcal{O}(v^{2d})\) feasible rankings, 
which vanishes as a proportion of the \(v!\) possible rankings as \(v\) grows.
For intuition, \cref{fig:graph} shows the set of rankings for a small matrix as a set of nodes in a graph, where feasible rankings that differ by a single adjacent element swap are connected by edges. The sparse set of feasible rankings, which form a connected, symmetric graph, make up only about 7\% of nodes. 
\begin{restatable}{theorem}{rankingprobability}
  \label{thm:argsortable_prob_upper_bound}
  For a $v\times d$ matrix with \(d\leq v/2\), the proportion of rankings that are feasible is less than \(o(c^{-d})\) for every fixed \(c\), i.e., rankings are super-exponentially unlikely to be feasible. 
\end{restatable}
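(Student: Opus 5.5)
Throughout, \(N(v,d)\) denotes the number of feasible rankings of a \(v\times d\) matrix, and the proportion in \cref{thm:argsortable_prob_upper_bound} is \(N(v,d)/v!\). The plan is to compare Cover's count of feasible rankings with \(v!\). The feasible rankings of \(\mW\) are exactly the regions of the arrangement of the \(\binom{v}{2}\) hyperplanes \(\{\vh : (\vw_i-\vw_j)^\top\vh=0\}\) in \(\R^d\). By the result of \citet{Cover1967} quoted above (equivalently, the standard bound on the number of regions of an arrangement of \(m\) hyperplanes in \(\R^d\)), we have
\[
N(v,d)\;\le\; 2\sum_{i=0}^{d-1}\binom{m-1}{i}\;\le\; 2d\,m^{d}, \qquad m=\binom{v}{2}\le v^2/2,
\]
so that \(N(v,d)\le 2d\,v^{2d}\). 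This gives the proportion bound
\[
p(v,d)\;:=\;\frac{N(v,d)}{v!}\;\le\;\frac{2d\,v^{2d}}{v!}.
\]

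The next step uses \(d\le v/2\), i.e.\ \(v\ge 2d\). I would show that \(p\) is super-exponentially small in \(d\) uniformly over all admissible \(v\). The key observation is that \(v^{2d}/v!\) is eventually decreasing in \(v\), since
\[
\frac{(v+1)^{2d}/(v+1)!}{v^{2d}/v!}=\frac{(1+1/v)^{2d}}{v+1}\le \frac{e}{v+1}<1
\]
once \(v\ge 2d\). The worst case is therefore \(v=2d\), where Stirling's formula gives
\[
\frac{2d\,(2d)^{2d}}{(2d)!}\;\approx\; 2d\,\frac{e^{2d}}{\sqrt{4\pi d}}.
\]
This is exponentially large, so the crude bound fails at the boundary. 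This is the main obstacle: the ratio \(2d\le v\) is too tight for the \(v^{2d}\) estimate.

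To get past this, I would replace \(m^{d}\) by the sharper \(\sum_{i<d}\binom{m-1}{i}\le (em/d)^{d}\le (ev^2/(2d))^{d}\). At \(v=2d\) this becomes \((2ed)^d\), while \(v!=(2d)!\approx (2d/e)^{2d}\). The ratio is then
\[
\frac{(2ed)^d}{(2d/e)^{2d}}=\frac{(2e^3)^d}{(2d)^{d}}\cdot\frac{1}{2^{d}}\cdot\frac{1}{d^{0}}\;\sim\; \left(\frac{e^3}{4d}\right)^{d}\cdot\text{(const)}.
\]
This is \(d^{-d}\) times an exponential, which is \(o(c^{-d})\) for every fixed \(c\).

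For general \(v\ge 2d\), I would apply the same sharper bound, \(p\le 2(ev^2/(2d))^d/v!\). I would again check monotone decrease in \(v\): the successive ratio is \((1+1/v)^{2d}/(v+1)\le e/(v+1)<1\). Hence \(\sup_{v\ge 2d}p(v,d)\le C\,(e^3/(4d))^d\), which gives the claim.

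The only delicate points are tracking the Stirling constants and the polynomial prefactors (the factor \(2\), the factor \(d\) from the sum, and \(\sqrt{d}\) terms). All of these are absorbed into \(o(c^{-d})\), because \((K/d)^d\,\mathrm{poly}(d)\,c^{d}\to 0\) for any fixed \(K\) and \(c\).
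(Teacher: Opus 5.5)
Your proof is correct and lands on the same kind of bound as the paper, a quantity of the form \((K/d)^d\) at the worst case \(v=2d\), but it gets there by a different counting route.

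\textbf{The paper's route.} It starts from Cover's exact count \(Q(v,d)=2\sum_{d'<d}S_{v,d'}\) for general-position matrices. It reduces to \(v=2d\) using monotonicity: the recurrence \(Q(v,d)=Q(v-1,d)+(v-1)Q(v-1,d-1)\) together with \(Q\) increasing in \(d\) gives \(Q(v,d)\le vQ(v-1,d)\), so the true proportion \(Q(v,d)/v!\) is nonincreasing in \(v\). It then bounds each elementary symmetric sum by (number of subsets)\(\times\)(largest product) and applies Stirling, arriving at \(32^d/d^d\).

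\textbf{Your route.} You bound the number of chambers of the central arrangement of the \(m=\binom{v}{2}\) hyperplanes \((\vw_i-\vw_j)^\top\vh=0\) by \(2\sum_{i<d}\binom{m-1}{i}\le 2(em/d)^d\). You then prove monotonicity in \(v\) for this explicit bound rather than for the exact count.

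\textbf{What each buys.} Your approach needs only the standard region bound for central arrangements, so it holds for every \(\mW\), not just general-position ones as the paper's proof implicitly assumes. It can also be made fully rigorous with \(n!\ge(n/e)^n\) in place of Stirling's approximation. The paper's approach works with the exact count, so it shows that the true proportion, not merely an upper bound, is maximized at \(v=2d\).

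Your bound is not literally ``equivalent'' to the quoted Cover formula; it dominates it (for instance \(S_{v,i}\le\binom{m-1}{i}\)), which is all you need. The side condition \(d\le m\) for the binomial-sum estimate holds because \(m\ge\binom{2d}{2}\).

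There is one arithmetic slip. At \(v=2d\) the ratio is \((2ed)^d/(2d/e)^{2d}=(e^3/(2d))^d\), and your own middle expression equals this. The final ``\(\sim (e^3/(4d))^d\cdot\)const'' drops a factor \(2^d\), which is not a constant. Using \((2d)!\ge(2d/e)^{2d}\) you get cleanly \(\sup_{v\ge 2d}p(v,d)\le 2(e^3/(2d))^d\). This is still \(o(c^{-d})\) for every fixed \(c\), so your conclusion is unaffected.
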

See \cref{sec:rankingcounting} for the proof.
By symmetry, each ranking has the same low probability of being feasible for a random matrix,
and so the probability that a random ranking from one random matrix is feasible for another 
is super-exponentially unlikely.
While language model heads are not random matrices,
our experiments will show that this statistical property dominates for sufficiently large \(k\). 
% \xiang{this belongs better in the next paragraph?}
% Thus, the feasible set of rankings for a language model is highly unique.

% \subsection{Top-\(k\) ranking signatures}

Top-\(k\) rankings with sufficiently large \(k\) are also model signatures.
The value of \(k\) that is sufficiently large depends on \(v\) and \(d\).
For small \(k\), e.g., \(k=1\), rankings are decidedly not unique because in high dimensions random points are likely to all be vertices of their convex hull, even when the number of points greatly exceeds the 
dimension~\citep{Donoho2006,grivas-etal-2022-low,basri2026}. As \(k\) increases we reach a sharp\footnotemark{} transition point where it becomes highly unlikely for top-\(k\) rankings to be shared.
% \cref{fig:randomtopk} illustrates this phase transition for random matrices, showing that it occurs abruptly. 
Since language model heads are not random, we might expect this phase change to occur at a slightly later point,
but we again observe in our experiments that the statistical unlikelihood of sharing top-\(k\) orderings eventually dominates. The transition occurs somewhere before the number of possible top-\(k\) rankings $_vP_k$ surpasses the number of feasible top-\(k\) rankings (bounded by the number of feasible rankings \(\mathcal{O}(v^{2d})\)).

\footnotetext{\cref{fig:randomtopk} in the Appendix demonstrates this sharpness.}

To verify the uniqueness of a model's set of feasible rankings, and to find the minimal \(k\) such that top-\(k\) rankings
% \ag{Caveat: top-k alone usually refers to the top-k set, not top-k ranking, may make sense to introduce jargon early on to make the distinction obvious.}
% \mf{Reworded for clarity. Do you have any suggestions for terminology that would make this clearer? I don't really use ``top-$k$ except to refer to top-$k$ rankings. Maybe I should just make a note of that when they first come up.}
differentiate between language models, we check whether the rankings from a target model
are feasible for other models with varying degrees of similarity to the target model 
(e.g., training data/step, model size).
We use the penultimate checkpoint for Pythia 70M trained on the deduplicated Pile dataset~\citep{pmlr-v202-biderman23a,gao2020pile}.
We obtain 100 rankings taken from a random document from the Pile, 
and find the smallest top-\(k\) that is infeasible for the other model, using a linear program to check feasibility and binary search to find the boundary.
The linear program checks the feasibility of a ranking \(r\) 
for model with head \(\mW\), 
by finding some \(\vh\) such that \(\argsort(\mW\vh)\)
has the same top-$k$.
Concretely, this means that \(\argsort(\mW\vh)\)
puts the top-\(k\) tokens from \(r\) in the same order 
(we call this the ``top-$k$ order'' constraint), 
and puts the top-$k$ tokens before all the other tokens 
(the ``top$>$bottom'' constraint).
The feasibility boundary---the minimum infeasible $k$---measures how close a ranking is to being feasible: 
the higher the boundary, the closer the full ranking is to be feasible.
\cref{fig:boundaries} presents our results.

\begin{figure}
  \centering
  \tiny
\begin{tikzpicture}
  \begin{groupplot}[
      group style={
        group size=2 by 1,
        yticklabels at=edge left, 
        horizontal sep=10mm,
      },
      ytick={1,2,3,4,5,6},
      ymin=0.5, ymax=6.5,
      yticklabels={{
        \texttt{pythia-160m}
      },{
        \texttt{pythia-70m} (untrained)
      },{
        \texttt{pythia-70m} (similar train data)
      },{
        \texttt{pythia-70m-step143000} (final)
      },{
        \texttt{pythia-70m-step142001} (1 step)
      },{
        \texttt{pythia-70m-step142000} (self)
      }},
      scaled x ticks=false,
      scale only axis,
      xticklabel style={
        /pgf/number format/fixed,
        /pgf/number format/1000 sep={,},
      },
      yticklabel style={font=\tiny},
      height=1.75cm,
      ymajorgrids=false,
      minor y tick num=1,
      yminorgrids,
      width=0.6\textwidth,
      every axis plot/.append style={
        mark=*,
        mark options={fill=black, draw=none},
        mark size=0.5pt,
      },
      every boxplot/.style={
        boxplot/box extend=0.5,
        boxplot/every box/.style={
          draw=none,
          fill=black,
        },
        boxplot/every whisker/.style={
          draw=black,
        },
        boxplot/every median/.style={
          draw=black!10,
        },
      },
    ]
    \nextgroupplot[
      ylabel={Model},
      ylabel style={
        rotate=-90,
        at={(axis description cs:0,1.05)},
        anchor=south east,
      },
      xlabel={First infeasible top-\(k\) \quad (\textrightarrow{} full ranking closer to feasible)},
      xmin=0, xmax=1600,
    ]
    \addplot[
      boxplot,
      boxplot/draw position=1,
    ] table[
      y=min_infeasible_k,
      col sep=comma,
    ] {results/task12_boundaries/v5_fulltail_160m_n100_mosek.csv};
    \addplot[
      boxplot={
        draw position=2,
      },
      boxplot/draw/median/.code={
        \draw [black]
        (boxplot box cs:\pgfplotsboxplotvalue{median},0)
        --
        (boxplot box cs:\pgfplotsboxplotvalue{median},1);
      },
    ] table[
      y=min_infeasible_k,
      col sep=comma,
    ] {results/task12_boundaries/v5_fulltail_random_n100_mosek.csv};
    \addplot[
      boxplot,
      boxplot/draw position=3,
    ] table[
      y index=3,
      col sep=comma,
    ] {results/task12_boundaries/v5_fulltail_nondedup_n100_mosek.csv};
    \addplot[
      boxplot,
      boxplot/draw position=4,
    ] table[
      y=min_infeasible_k,
      col sep=comma,
    ] {results/task12_boundaries/v5_fulltail_final_checkpoint_n100_mosek.csv};
    \addplot[
      boxplot,
      boxplot/draw position=5,
    ] table[
      y=min_infeasible_k,
      col sep=comma,
    ] {results/task12_boundaries/v5_fulltail_single_weight_update_n100_mosek.csv};
    \nextgroupplot[
      width=0.1\textwidth,
      xtick={50304},
      xticklabels={$v=\num{50304}$},
      xticklabel style={anchor=north east, xshift=8pt},
      xmax={50350}, xmin=50100,
      % axis x discontinuity=crunch,
    ]
    \addplot[
      boxplot={
        draw position=6,
      },
      boxplot/draw/median/.code={
        \draw [black]
        (boxplot box cs:\pgfplotsboxplotvalue{median},0)
        --
        (boxplot box cs:\pgfplotsboxplotvalue{median},1);
      },
    ] table[
      y index=3,
      col sep=comma,
    ] {results/task12_boundaries/task12_pythia_hardness_boundaries_self_mosek.csv};
  \end{groupplot}
\end{tikzpicture}
  \caption{
    Distribution of feasibility boundaries (smallest \(k\) such that the top-\(k\) of a ranking is infeasible) 
    of \num{100} rankings from Pythia 70M checkpoint \num{142000} 
    (hidden size $d=512$).
    % \ag{What is the embedding dimensionality of the model?}
    % \mf{Detail added}
    A boundary at \(v=\num{50304}\) signifies that the ranking is feasible;
    the lower the bound, the further from feasible.
    Models are ordered by similarity to the model that produced the ranking: 
    the model itself;
    the model after a single training step;
    the model's final checkpoint; 
    a model trained on similar but different data (documents not deduplicated), 
    an untrained, randomly initialized model; 
    and a larger model ($d=768$). 
    % \ag{It might be cute to also mention the dimensionality at which it is possible to make top-k rankings feasible. I am a bit surprised by how small the top-k overlap is with the untrained model.}
    % \mf{Hm I don't quite follow, can you clarify?}
    % A few top-\(k=v\) rankings are reported as infeasible for the model itself due to failures in the solver, but boundaries remain near \(v\).
    % After a single training step, all rankings become infeasible.
    % By the final checkpoint, the rankings are just as infeasible as they are for a randomly initialized model.
    % The larger, more expressive 160M (\(d=768\)) model has a higher feasibility boundary.
  }
  \label{fig:boundaries}
\end{figure}

% TODO make explicit that we only care about the final layer

% Empirically, the similarity between the ranking signatures for any two language models of the same size depends on the models' similarity.
% The model after a single weight update is more similar to the original model than the model's final checkpoint, is more similar than a model trained on slightly different data, is more similar than a random model. 
% Other, non-derivative models of the same size should be expected to fall somewhere between the latter two similarity levels.

The signature is incredibly sensitive to perturbations.
The full rankings (i.e., \(k=v\)) are only ever feasible for the target model (though sometimes numerical precision issues cause the solver to find a slightly lower boundary).
After a single weight update to the model
(using cross entropy loss on a random document from the Pile)
with a maximum absolute change in the weights of \num{1.22e-4},
none of the tested full rankings are feasible.
% \ag{It is a bit unclear how one could prove that none of the rankings remained feasible, are these the rankings for a specific set of inputs (i.e. not the full set of rankings, right?)? Interesting. We should check whether a sample of the maximal minors also flip signs, there's a very cool function I learned about: slogdet.}
% \mf{You are right. THis is only for *tested* rankings}
Other perturbations we tried, like lowering the floating point precision of the weights (not shown), also change the signature completely.

% Models with the same vocabulary, but with larger hidden dimension $d$ have a larger set of feasible rankings, 
% % \ag{caveat: unembedding matrices need to be in general position}
% % \mf{This is true, but only for degenerate cases. How important do you think it is to be fully precise here?}
% and therefore we can expect these to have a higher feasibility boundary.
% Indeed, we see this for Pythia 160M, which has an elevated feasibility boundary. 

% TODO add caveat that we are dealing with models with unembedding layers and the same vocabulary.

% The two constraint classes (within-top-$k$ order, and top-$k$-before-bottom-$(v-k)$)
% fully explain the ordering of the feasibility boundaries between models. 
When \(k>d\), the strict top-$k$ order constraint 
makes feasibility difficult unless the unembedding matrix is almost identical to the target. 
Thus only the extremely similar 1-step model has a feasibility boundary above \(d\). 
The top$>$bottom constraints play a larger role in differentiating matrices that are not near-identical to the target.
Without it (applying only the top-$k$ order constraint) all models (except the 1-step perturbation) would have feasibility boundaries tightly clustered just barely above $d$.
With the constraint, the boundary gradually approaches $d$ from below as the model becomes more similar to the target.
For example, the final checkpoint of the same model has a higher feasibility boundary than a model trained from scratch on different data.
The boundary for Pythia 160M appears higher only because it has a larger hidden size at \(d=768\),
compared to the 70M models with \(d=512\).
For the unrelated random matrix, the feasibility boundary is extremely low, i.e., less than $k=100$.

To get a sense of the top-\(k\) required to differentiate larger models, we test a random (from the Pile) ranking from Olmo 3 8B~\citep{olmo2025olmo3} against the post-trained Olmo 3 8B Instruct, both having dimension $\num{100352}\times\num{4096}$ 
and find the boundary at \(k=\num{2481}\).
At this scale, the optimizer is slower: we are able to solve one feasibility problem at this scale every 2--3 minutes using a commercial solver~\citep{mosek}, taking about 30 minutes to run the binary search for the boundary.

\section{The ranking signature is unforgeable}
\label{sec:realization_of_rankings_is_hard}

Having shown that every language model's ranking signature is unique,
we now show that the ranking signature is \emph{unforgeable}. 
Specifically, it is computationally hard to use a set of feasible rankings from a hidden target model to construct an unembedding matrix that can produce the same set of feasible rankings.
This task, which we dub the ``signature forging problem'', is \NP-hard---in fact, it is hard for the complexity class \ER, where \(\NP\subseteq\ER\subseteq\PSPACE\)~\citep{DBLP:conf/gd/Schaefer09,DBLP:conf/stoc/Canny88}.
This is true even under the generous assumption that the attacker has access to \textit{all} the feasible rankings for a target model. 
% \ag{nitpick: maybe intractable is better word given infeasible is used for the rankings which cannot be produced.}
% \mf{True. Here and elsewhere let's use ``intractable'' for computational hardness and ``(in)feasible'' for whether a ranking can be obtained from a matrix. If you see any violations, go ahead and change them.}

We will obtain the hardness result by showing that a known \ER-hard problem is a special case of the signature forging problem.
% \xiang{We need to define ``signature forging problem".}
Specifically, and with formal definitions to follow, we will show that an algorithm that steals the signature in polynomial time can be used to decide the \textit{realizability of simple allowable sequences}~\citep{hoffmann2018universalitytheoremallowablesequences}.
The connection between these two problems (which we will soon explain) 
lies in the fact that the set of feasible rankings for a \(v\times 2\) matrix is a realizable simple allowable sequence.\footnote{
  Simple allowable sequences are sufficient for our proof,
  but can be generalized to higher dimensions $d\geq2$ in the form of sweep oriented matroids~\citep{Padrol2024},
  combinatorial abstractions of feasible ranking sets.
  }

\begin{definition}[Simple allowable sequence]
  \label{def:sas}
  A simple allowable sequence over \(v\) is a sequence of permutations of \(1,2,\ldots,v\)
  where neighbors differ by exactly one swap of adjacent elements
  and each pair of elements is reversed exactly once.
\end{definition}

For example, 
\begin{equation}
  \begin{tikzpicture}
    \node               (123) {123};
    \node[right=of 123] (132) {132};
    \node[right=of 132] (312) {312};
    \node[right=of 312] (321) {321};
    \draw (123) 
    to node[font=\tiny, above] {23} (132) 
    to node[font=\tiny, above] {13} (312)
    to node[font=\tiny, above] {12} (321)
    ;
  \end{tikzpicture}
  % (1,2,3),({1,\color{red}3,2}),({\color{red}3,1},2),(3,{\color{red}2,1})
  \label{eq:sas_ex}
\end{equation}
is a simple allowable sequence (swapped elements indicated above edges).
Notice that the last permutation in the sequence will always be a reversal of the first.

Some simple allowable sequences can be interpreted as a description of the set of feasible rankings for a \(v\times2\) matrix~\(\mW\),
\ie, the set \(\{\argsort(\mW\vx)\mid\vx\in\R^2\}\).
This becomes evident 
in \cref{fig:realizability},
where the rows of a \(3\times2\) matrix serve as coordinates of points in \(\R^2\). If we choose an arbitrary nonzero vector~\(\vx\), 
then read off the rankings obtained from \(\argsort(\mW\vx)\) as \(\vx\) is continuously rotated 180 degrees to \(-\vx\),
each intermediate ranking differs by a single swap of adjacent elements,
and each pair of elements is swapped exactly once -- a simple allowable sequence.
Reversing the intermediate rankings in the sequence gives the remaining feasible rankings for the matrix
(equivalently, we can obtain the remaining feasible rankings by continuing the rotation to a full 360 degrees).
A \(v\times2\) matrix is said to \textit{realize} the sequence obtained in this manner,
and a simple allowable sequence over \(v\) is \textit{realizable} if there is a \(v\times 2\) matrix that realizes it.

\begin{figure}
  \begin{minipage}[t]{0.47\textwidth}
    \centering
    \begin{tikzpicture}[
    scale=0.5,
    dir/.style={},
  ]
  % Arrangement
  \coordinate (one) at (1,2) {};
  \coordinate (two) at (2,1) {};
  \coordinate (thr) at (0,0) {};

  % Directions
  \coordinate (l1) at (-2,4);
  \coordinate (l2) at ($(l1) + (0,-5)$);
  \node[font=\footnotesize, below] at (l2) {$(1,2,3)$};
  \draw[-latex, dir] (l1) -- (l2);
  \draw[opacity=0.5, gray] (one) -- ($(l1)!(one)!(l2)$);
  \draw[opacity=0.5, gray] (two) -- ($(l1)!(two)!(l2)$);
  \draw[opacity=0.5, gray] (thr) -- ($(l1)!(thr)!(l2)$);
  \node[fill, inner sep=0.5pt, circle, label={[font=\footnotesize]west:1}] at ($(l1)!(one)!(l2)$) {};
  \node[fill, inner sep=0.5pt, circle, label={[font=\footnotesize]west:2}] at ($(l1)!(two)!(l2)$) {};
  \node[fill, inner sep=0.5pt, circle, label={[font=\footnotesize]west:3}] at ($(l1)!(thr)!(l2)$) {};

  \coordinate (l2) at ($(l1) + (5,1)$);
  \draw[-latex, dir] (l1) -- (l2);
  \draw[opacity=0.5, gray] (one) -- ($(l1)!(one)!(l2)$);
  \draw[opacity=0.5, gray] (two) -- ($(l1)!(two)!(l2)$);
  \draw[opacity=0.5, gray] (thr) -- ($(l1)!(thr)!(l2)$);
  \node[fill, inner sep=0.5pt, circle, label={[font=\footnotesize]north:1}] at ($(l1)!(one)!(l2)$) {};
  \node[fill, inner sep=0.5pt, circle, label={[font=\footnotesize]north:2}] at ($(l1)!(two)!(l2)$) {};
  \node[fill, inner sep=0.5pt, circle, label={[font=\footnotesize]north:3}] at ($(l1)!(thr)!(l2)$) {};
  \node[font=\footnotesize, right] at (l2) {$(3,1,2)$};

  \coordinate (l2) at ($(l1) + (4,-4)$);
  \draw[-latex, dir] (l1) -- (l2);
  \draw[opacity=0.5, gray] (one) -- ($(l1)!(one)!(l2)$);
  \draw[opacity=0.5, gray] (two) -- ($(l1)!(two)!(l2)$);
  \draw[opacity=0.5, gray] (thr) -- ($(l1)!(thr)!(l2)$);
  % \node[fill, inner sep=1pt, circle, label={[font=\footnotesize]north:1}] at ($(l1)!(one)!(l2)$) {};
  % \node[fill, inner sep=1pt, circle, label={[font=\footnotesize]north:2}] at ($(l1)!(two)!(l2)$) {};
  % \node[fill, inner sep=1pt, circle, label={[font=\footnotesize]north:3}] at ($(l1)!(thr)!(l2)$) {};
  \node[font=\footnotesize, right] at (l2) {$(1,3,2)$};

  % \node[font=\footnotesize] at (-2.25,-2) {$(1,3,2)$};
  \draw[->, shorten <=0.1mm] (l1) ++(0,-0.5) arc [start angle=-90, end angle=80, radius=0.5cm];

  \draw[-latex, dir] (l1) -- ($(l1) + (0.2,2)$) node[font=\footnotesize, above] {$(3,2,1)$};

  \node[fill, inner sep=1.5pt, circle, label=1] at (one) {};
  \node[fill, inner sep=1.5pt, circle, label=2] at (two) {};
  \node[fill, inner sep=1.5pt, circle, label=3] at (thr) {};
\end{tikzpicture}
    \caption{The realization of the simple allowable sequence~(\ref{eq:sas_ex}).
    Projecting the points onto a vector gives a token ordering. 
    Rotating the vector gives a sequence of token rankings.
    }
    \label{fig:realizability}
  \end{minipage}
  \hfill
  \begin{minipage}[t]{0.47\textwidth}
    \centering
    \includegraphics[width=0.75\textwidth,angle=90]{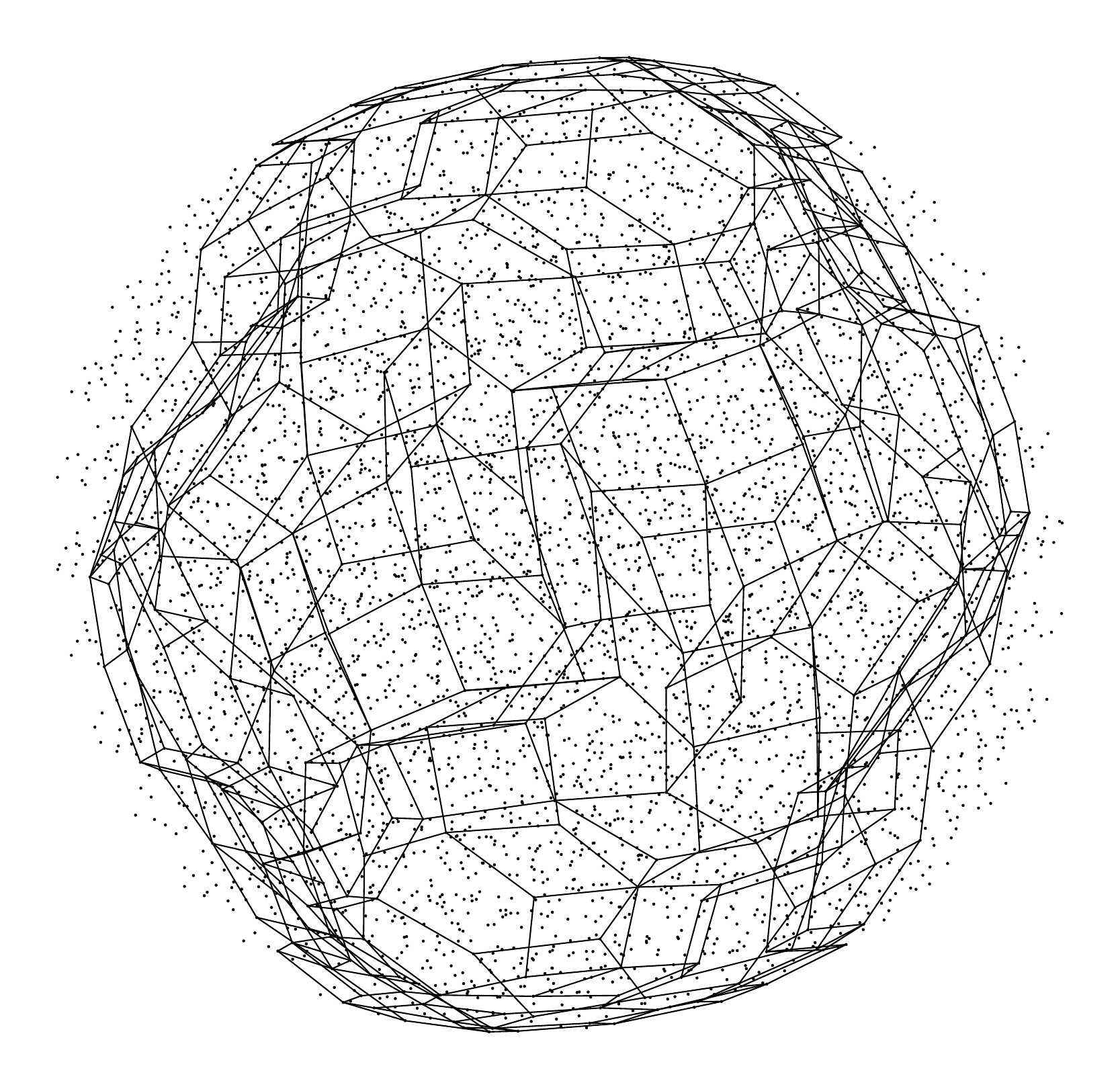}
    \caption{The feasible rankings (connected nodes) among the set of all rankings (all points) for a random $7\times3$ matrix. There are $7!=\num{5040}$ possible rankings, and \num{352} are feasible (7\%).}
    \label{fig:graph}
  \end{minipage}
\end{figure}

It may come as a surprise that not all simple allowable sequences are realizable. 
In fact, the smallest case of an unrealizable sequence appears when \(v=5\),
where
\begin{equation}
  \begin{tikzpicture}[node distance=7mm]
    \node (1)   {12345};
    \node[right=of 1]      (2)   {21345}    edge node[font=\tiny, above] {12} (1);
    \node[right=of 2]      (3)   {21435}    edge node[font=\tiny, above] {34} (2);
    \node[right=of 3]      (4)   {21453}    edge node[font=\tiny, above] {35} (3);
    \node[right=of 4]      (5)   {24153}    edge node[font=\tiny, above] {14} (4);
    \node[right=of 5]      (6)   {42153}    edge node[font=\tiny, above] {24} (5);
    \node[below=of 1]      (7)   {42513}    edge[out=north, in=south, looseness=0.1] node[font=\tiny, above] {15} (6);
    \node[right=of 7]      (8)   {42531}    edge node[font=\tiny, above] {13} (7);
    \node[right=of 8]      (9)   {45231}    edge node[font=\tiny, above]  {25} (8);
    \node[right=of 9]      (10)  {54231}    edge node[font=\tiny, above] {45} (9);
    \node[right=of 10]     (11)  {54321}    edge node[font=\tiny, above] {23} (10);
  \end{tikzpicture}
\end{equation}
is not realizable~\citep[Theorem~3.3]{Goodman1980OnTC}.
% \ag{Does this example differ from that in the Goodman/Pollack paper?}
% \mf{It is the same, except I permuted the elements so that the first and last permutations in the sequence are 12345 and 54321.}
\footnote{
  From the sequence alone, it is not obvious why this sequence is not realizable. 
  As an exercise, you can try coming up with a point arrangment that realizes it (to see that it is not possible). 
  Hint: the sequence is \textit{almost} realized by a set of points arranged in a pentagon.
}

Deciding whether a sequence is realizable is computationally hard. 
In fact, it is complete for the complexity class
known as \textit{the existential theory of reals}, 
or \ER{}, which lies between \NP{} and \PSPACE~\citep{hoffmann2018universalitytheoremallowablesequences}.
We will now show that solving the ranking signature forgery problem solves the realizability problem for simple allowable sequences.

\begin{theorem}
  Finding a \(v\times d\) matrix for which a given set of rankings are feasible is \ER-hard.
  % \ag{I think this should be full-set of rankings, \ie $\frac{|\argsortable{W}|}{2} + 1$ rankings that form an allowable sequence and not an arbitrary number of rankings? Counter-example: give me two rankings and I will give you $\mathbf{W} \in \R^{v \times 2}$ by setting its two columns to have values that realise the two needed argsorts, and x can act as a switch between the two, \ie x=[1,0] and x=[0,1]}
  % \mf{In response to Andreas, the point is not that every particular problem is hard, rather that there exist ranking sets that are hard to find a realization for. By analogy, there are many \SAT{} problems that are trivially satisfiable, but there is no known efficient algorithm for deciding \SAT{} in general.}
  % \ag{You are right, thanks for clarifying. In the wording below you say ``emits those rankings and no others'', this is partly what makes me think you need to somehow constrain the number of rankings as for a given W in general position we know the number of feasible rankings. E.g. suppose I give you a single ranking, there is no W that emits a single ranking and no others.}
  \label{thm:ranking_pattern_realizability_is_hard}
\end{theorem}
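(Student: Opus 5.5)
We prove the theorem by a polynomial-time reduction from the realizability problem for simple allowable sequences, which is \ER-complete~\citep{hoffmann2018universalitytheoremallowablesequences}, to the forging problem with \(d=2\). Given a simple allowable sequence \(\pi_0,\pi_1,\ldots,\pi_m\) over \(v\) with \(m=\binom{v}{2}\), we form the ranking set
\[
\mathcal S=\{\pi_0,\ldots,\pi_m\}\cup\{\mathrm{rev}(\pi_1),\ldots,\mathrm{rev}(\pi_{m-1})\},
\]
which has \(2m=v(v-1)\) rankings and is computed in time polynomial in \(v\). We pose the forging problem as: find \(\mW\in\R^{v\times 2}\) whose set of feasible rankings \(\{\argsort(\mW\vx)\mid\vx\in\R^2\}\) contains the rankings of \(\mathcal S\) and no others. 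The theorem is read as hardness in the worst case over inputs. Demanding ``exactly this set'' matches the signature-forging setting, where the attacker is handed the whole signature. It also rules out trivial instances such as a single ranking, or two rankings fit by giving each column of \(\mW\) one of the required orders.

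The first step is to show that a matrix realizing the sequence yields exactly \(\mathcal S\). Let \(\mW\) have rows \(\vw_1,\ldots,\vw_v\in\R^2\) that realize the sequence. Rotating \(\vx\) through a full circle, the ranking changes only when \(\vx\) crosses a direction orthogonal to some \(\vw_i-\vw_j\). By the definition of realization, the rankings met on a half-turn are exactly \(\pi_0,\ldots,\pi_m\). The other half-turn is given by \(\vx\mapsto-\vx\), which reverses every ranking, and its interior rankings are the reverses \(\mathrm{rev}(\pi_1),\ldots,\mathrm{rev}(\pi_{m-1})\). Two points need checking. First, \(\mathrm{rev}(\pi_0)=\pi_m\), so the endpoints are not double counted. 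Second, the argsort of a vector with ties is some tie-breaking convention, so we adopt that generic directions (all logits distinct) determine the feasible set; this only adds or removes the measure-zero tied directions.

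The second step is the converse, and it is the main obstacle. Suppose \(\mW\in\R^{v\times 2}\) has feasible set exactly \(\mathcal S\). We must extract a realization of the given sequence in its given order, not just some sequence built on the same rankings. The plan is as follows.

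(a) Because \(\mathcal S\) contains a ranking and its reverse, all rows of \(\mW\) are distinct: if \(\vw_i=\vw_j\), tokens \(i\) and \(j\) could never be strictly ordered.

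(b) As \(\vx\) rotates, each pair \(\{i,j\}\) swaps exactly twice per full turn. The circular sequence of feasible rankings is therefore a simple allowable sequence followed by its reversal, possibly with several transpositions at the same angle when some \(\vw_i-\vw_j\) are parallel.

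(c) \(\mathcal S\) has exactly \(v(v-1)\) elements, the largest number a \(v\times 2\) matrix can produce (each of the \(\binom{v}{2}\) pairs contributes two crossings). So no two difference vectors are parallel: every crossing is a single adjacent transposition, and \(\mW\) is in general position in this sense.

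(d) We recover the cyclic order from adjacency. The rankings at distance exactly one adjacent swap from a given element of \(\mathcal S\) should be precisely its two neighbours on the circle. This needs an argument that no non-consecutive pair of rankings differs by one adjacent swap. I would try to prove it from the fact that each pair reverses once per half-turn: if two non-consecutive rankings differed only in the pair \((a,b)\), the path between them would flip every other pair an even number of times, hence zero times within a half-turn, and this forces them to be consecutive.

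(e) With the cycle determined by \(\mathcal S\) alone, starting at \(\pi_0\) and reading a half-turn in the direction through \(\pi_1\) reproduces \(\pi_0,\ldots,\pi_m\). Hence \(\mW\) realizes the sequence.

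These two steps give that the sequence is realizable if and only if the forging instance has a solution. Any algorithm for the forging problem with \(d=2\) therefore decides realizability, and the forging problem is \ER-hard.

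As an optional extension, hardness can be lifted to larger \(d\) with padding, for example by appending dummy tokens or coordinates that force the extra dimensions to be inert. Hardness for \(d=2\) alone already establishes the theorem as stated, so this extension is not needed.
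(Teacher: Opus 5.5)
Your proposal is correct and uses the same reduction as the paper: you encode a simple allowable sequence as its permutations together with their reversals, and you observe that a \(v\times 2\) matrix with exactly this feasible set is a point configuration realizing the sequence. The paper only asserts this last implication, whereas your steps (a)--(e) prove it: the count \(|\mathcal S|=v(v-1)\), which holds for every simple allowable sequence since each pair is swapped exactly once, rules out parallel differences, and the one-swap adjacency argument, applied along the shorter arc, recovers the cyclic order; moreover, since \(v(v-1)\) is the maximum for \(d=2\), your ``exactly \(\mathcal S\)'' formulation coincides with the literal reading in which the given rankings need only be among the feasible ones.
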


\begin{proof}[Proof]
  Suppose we have such an algorithm that takes a set of rankings, and finds in polynomial time a matrix~\(\mW\) for which those rankings are feasible. 
  Given a simple allowable sequence, we can decide whether the sequence is realizable by letting the set of rankings be the elements of the sequence and their reversals and finding a matrix~\(\mW\in\R^{v\times 2}\) with that set of feasible rankings.
  % \ag{I think allowing d to be greater than 2 here will lead to problems, \eg I could take $d=v$. Is there any reason to allow d >2?}
  % \mf{If it fits the rankings exactly, then the matrix must have rank 2. I allow d>2 in order to show that all ranking fitting algorithm must be NP-hard, even when they allow d>2.}
  % Note that if the algorithm finds a \(v\times d\) matrix with \(d>2\), then the matrix must have matrix rank 2, otherwise it would emit rankings not in the sequence---a contradiction.
  % \ag{I think this contradicts your earlier statement where you say ``emits those rankings and no others''}.
  % \mf{It is a contradiction. Made it expicit}
  % If this is the case, we can simply remove the unused dimensions via singular value decomposition to obtain a \(v\times2\) matrix.
  If such a matrix exists, and the algorithm finds it, the matrix columns constitute a point arrangement that realizes the sequence.
  Thus finding a matrix for which a set of rankings is feasible is at least as hard as simple allowable sequence realizability. Since simple allowable sequence realizability is \ER-complete, fitting a matrix to a ranking pattern must be \ER-hard.
\end{proof}
The implication of \cref{thm:ranking_pattern_realizability_is_hard} 
is that forging the ranking signature is theoretically intractable, 
since it is hard for an attacker to find a matrix with the same set of feasible rankings. 
% This result extends to top-\(k\) rankings, since top-\(k\) rankings reveal strictly less information than full rankings, and also holds in more realistic scenarios where the attacker has more restricted access to the rankings. 
% \ag{Would it make sense to give more information about what similar means, \eg we could say that from OMs we know that pinning down a realisable OM means we determine the chirotope, which restricts the matrices to those that have maximal minors with a particular sign pattern (or its negation)?}.
% \mf{Made ``similar'' precise with parenthetical}
% Our finding is good news for any inference provider who wishes to keep their model parameters secret while giving their users more granular information about the generation process.

Our analysis about the hardness of signature forgery immediately raises questions about \emph{practical} hardness.
After all, many \NP-hard problems turn out to be not so hard on average,
or may admit efficient approximation algorithms.
% One of the properties that makes \ER-complete problems hard is that their solutions may require intractably high precision, \eg, they may require writing down a super-polynomial number of bits.
% Bounding the precision of the solution 
% potentially makes the problem easier,
% perhaps even moving the problem into \NP---an interesting open question. 
Nevertheless, \ER-complete problems have historically proven difficult to approximate~\citep{DELIGKAS2022106, 9317986}, 
and \ER-hardness is considered a ``strong indicator of theoretical and practical algorithmic difficulties''~\citep{bertschinger2023training}.

To demonstrate the hardness of signature forgery,
we initialize a random \(128\times 64\) target matrix,
use it to generate a sample of rankings by randomly sampling vectors \(\vx\) and recording \(\argsort(\mW\vx)\), 
fit a new randomly initialized matrix to those rankings via gradient descent,
and test whether the target and fit matrices agree on randomly sampled rankings.
In this setup, we find that fitting more than \num{50} rankings generally fails due to the non-convexity of the problem. 
At 48 rankings, only 3/5 fitting attempts succeeded.
For each of those successful fits, we tested \num{2000} held-out rankings from the target matrix and found that none were feasible for the fit matrix.

% \paragraph{Higher dimension realizability is likely harder computationally}
% Stealing model parameters from rankings is likely hard
% because the description size of the ranking set 
% is often exponentially large.
% While the simple allowable sequences can be described by listing their \(\mathcal{O}(v^2)\) rankings,
% \citet{Cover1967} showed that for larger $d$,
% the number of feasible rankings for a general position matrix\footnotemark{} grows exponentially as~\(\mathcal{O}(v^{2d})\).
% \footnotetext{We say that a matrix $\mW \in \R^{v \times d}, v > d$, is in ``general position'' if every set of \(d\) rows is linearly independent.}
% Therefore it takes exponential time to even read the algorithm input.
% This does not preclude cleverer schemes for communicating the set of rankings, after all, the realization of the ranking set itself is a compact, implicit encoding of the set of rankings.
% Nevertheless, in a setup where the attacker is allowed to see individual rankings,
% it is likely that the attacker would need to see an exponential number of rankings before they could deduce the full set of rankings.
% We explore this idea in \cref{sec:theoryforge}, as it is pertinent to the idea of rankings as language model signatures.

\section{Rankings expose approximate model parameters (but not signatures)}

Prior work on linear and ellipse-based model signatures has required access to token probabilities (or logits) through an API,
but such detailed API access has become increasingly uncommon due to security risks, 
since token probabilities can leak model parameters---specifically the model's unembedding matrix up to a \(d\times d\) linear transformation, i.e., the column space of \(\mW\)~\citep{carlini2024stealing,finlayson2024logits}.
Does the ranking signature have the same drawback?
We show that, under certain assumptions, 
full token rankings\footnotemark{} alone
are sufficient to asymptotically recover the column space of the unembedding matrix, 
overturning the assumption that hiding token probabilities is sufficient to prevent parameter leaks.
Note that there is no contradiction between unforgeability and parameter stealing: the stealing method requires strong assumptions about the hidden state and logit distributions, and it provides no guarantee on the number of samples required to attain a high-precision approximation; forgery, by contrast, requires exact reconstruction of a highly discontinuous and sensitive combinatorial object.
% \xiang{It is confusing becoz here you said model params can be recovered from feasible rankings; (and thus you could possibly reconstruct feasible rankings at good chance?) but in Sec. 4 you claimed that rankings are unforgable.}
Fortunately, inference providers who wish to present the model signature without leaking model parameters can do so by restricting the rankings to a particular number of top-$k$ tokens so that the rankings are infeasible for other models, but the ranking information is too coarse to reconstruct the parameters.
\footnotetext{
  Exposing token rankings does not necessarily require returning them explicitly.
  For instance, an interface that allows banning specific tokens implicitly exposes the rankings, since repeatedly querying the model with the same prefix, recording the top next token and adding it to the ban list at each iteration, yields each token in its ranked order.
}

% \swabha{a little repetitive relative to the intro. what would be the utility of token rankings?}
% \mf{Tightened}
% We demonstrate that attackers cannot use token rankings to reverse engineer model parameters in polynomial time (so long as \(\P\neq\NP\)).

The intuition for our parameter stealing method is that rankings asymptotically reveal token rank correlations.
If the logits are jointly Gaussian, then the Spearman correlations of their rankings determine the Pearson correlations of the logits.
Thus, rankings recover the low-rank correlation geometry induced by the final layer.
\begin{restatable}{theorem}{colspacerecovery}
  \label{thm:colspace-recovery}
  For a language model with unembedding matrix \(\mW\) 
  whose final hidden states follow a Gaussian distribution with full-rank covariance,
  suppose all token logits have equal marginal variance, \ie,
  \(\vw_i^\top \Cov(\vh)\vw_i\) is constant across rows \(\vw_i\) of \(\mW\).
  Then the column space of \(\mW\) is recoverable from \(n\) rankings
  as \(n\to\infty\),
  in \(\mathcal{O}(nv^2+v^3)\) time.
\end{restatable}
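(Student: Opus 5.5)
The plan is to turn each observed ranking into the vector of ranks it assigns to the $v$ tokens. From these we estimate the $v\times v$ Spearman rank-correlation matrix. We then invert the known Gaussian relation between Spearman and Pearson correlation to obtain the logit correlation matrix. Its top-$d$ eigenspace will be the column space of $\mW$ up to an unavoidable overall scaling.

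Write $\Sigma=\Cov(\vh)$, which is full rank, and $\vell=\mW\vh$. Then $\vell$ is Gaussian with covariance $\mW\Sigma\mW^\top$, which has rank $d$ (assuming $\mW$ has full column rank) and column space $\col(\mW)$. By the equal-variance hypothesis every diagonal entry equals a common $\sigma^2$. Hence the Pearson correlation matrix is $\mR=\sigma^{-2}\mW\Sigma\mW^\top$, and its column space is still $\col(\mW)$. This is exactly where the equal-variance assumption enters: without it the correlation matrix is $\mD^{-1/2}\mW\Sigma\mW^\top\mD^{-1/2}$, whose column space is a diagonally rescaled copy of $\col(\mW)$.

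The key step is the classical identity for bivariate Gaussians relating the population Spearman correlation to Pearson's $\rho$:
\[
\rho_S = \tfrac{6}{\pi}\arcsin(\rho/2).
\]
This map is strictly increasing on $[-1,1]$, so it is invertible entrywise with $\rho=2\sin(\pi\rho_S/6)$. Since a ranking determines each token's rank, and a token's rank among the $v$ logits is invariant to a common shift or positive scaling, we need a bridge to the population quantity. Here I would use the ordering itself: the event $\ell_i>\ell_j$, together with the order relations of tokens $i$ and $j$ against the rest, is read directly off a ranking. I will therefore use the grade correlation as the estimator. Concretely, Kendall's $\tau$ between tokens $i$ and $j$ across pairs of samples, or equivalently the across-sample Spearman statistic of each token's rank, converges by the law of large numbers to its population value $\tau_{ij}=\tfrac{2}{\pi}\arcsin\rho_{ij}$. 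The sample covariance of rank vectors over $n$ samples converges as $n\to\infty$.

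The main obstacle is precisely this bridge. The rank of token $i$ within one ranking is a within-sample rank over tokens, not the across-sample rank that Spearman's identity concerns, so I must check that the chosen statistic has a known limit that is an invertible function of $\rho_{ij}$. My first attempt will be the across-sample rank of each token's within-sample position; if that is unwieldy, I will fall back on a concordance statistic computable from the orderings, for which the Gaussian orthant probability $\tfrac14+\tfrac{1}{2\pi}\arcsin\rho$ gives the inversion.

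Given a consistent estimate $\hat{\mR}\to\mR$, the estimate converges to a rank-$d$ positive semidefinite matrix. By Davis--Kahan, the span of the top $d$ eigenvectors of $\hat{\mR}$ converges to $\col(\mR)=\col(\mW)$, using the eigengap, which is positive because $\mR$ has exactly $d$ positive eigenvalues.

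For the complexity, computing rank vectors costs $\mathcal{O}(nv)$ after reading each ranking. Forming the $v\times v$ rank-correlation matrix over $n$ samples costs $\mathcal{O}(nv^2)$, and the entrywise sine transform costs $\mathcal{O}(v^2)$. The eigendecomposition costs $\mathcal{O}(v^3)$, for a total of $\mathcal{O}(nv^2+v^3)$.
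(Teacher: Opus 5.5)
\textbf{There is a genuine gap.} It is the step you yourself flag as ``the main obstacle,'' and none of your candidate statistics closes it.

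Kendall's $\tau$ and Spearman's $\rho_S$ have the Gaussian limits $\frac{2}{\pi}\arcsin\rho_{ij}$ and $\frac{6}{\pi}\arcsin(\rho_{ij}/2)$ only when computed from the logits themselves, or from strictly increasing transforms of each $\ell_i$ separately. That requires cross-sample comparisons of $\ell_i^{(s)}$ with $\ell_i^{(t)}$, which no ranking reveals. (The two statistics are also not ``equivalent'': their limits differ.)

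Applied instead to the within-sample ranks $R_i^{(t)}=1+\#\{k:\ell_k^{(t)}>\ell_i^{(t)}\}$, they become computable. But $R_i^{(t)}$ depends on the whole logit vector of sample $t$, not monotonically on $\ell_i^{(t)}$ alone, so the arcsin identities say nothing about their limits.

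The orthant fallback fails for the same reason. The orthant events you can read off are within-sample ones such as $\{\ell_i>\ell_k,\ \ell_j>\ell_m\}$. Their probabilities involve correlations of differences, not $\rho_{ij}$. Moreover, $\frac14+\frac{1}{2\pi}\arcsin(\cdot)$ requires those differences to be centered, which fails whenever $(\vw_i-\vw_k)^\top\mathbb{E}[\vh]\neq 0$.

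More fundamentally, no statistic of the kind you want exists. Every function of the rankings is invariant under $\vell\mapsto\vell+Z\mathbf{1}$ for a scalar $Z$, and this can happen inside the hypotheses. Take $\mW=[\mathbf{1}\ \mW_0]$ and $\vh=(h_1,\vh_0)$, with $h_1\sim\mathcal{N}(0,s^2)$ independent of $\vh_0$ and $\mW_0\Cov(\vh_0)\mW_0^\top$ having constant diagonal $\sigma_0^2$. Varying $s$ changes no ranking and preserves equal variance, yet $\rho_{ij}=\bigl(s^2+(\mW_0\Cov(\vh_0)\mW_0^\top)_{ij}\bigr)/(s^2+\sigma_0^2)$ moves. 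So no ranking statistic can converge to an invertible entrywise function of $\rho$.

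What rankings can identify is at most the law of $P\vell$ up to scale, where $P=\mI-\frac1v\mathbf{1}\mathbf{1}^\top$. For example, when the logits have equal means, orthant frequencies of differences give $B=P\mW\Cov(\vh)\mW^\top P$, and hence $\col(P\mW)$, not $\col(\mW)$. Equal variance must then do real work to restore the $\mathbf{1}$-component. If $\mathbf{1}\notin\col(\mW)$, write $\vell-\mathbb{E}\vell=(\mI+\mathbf{1}u^\top)P(\vell-\mathbb{E}\vell)$ with $u\in\col(B)$. Equal variance forces $2Bu=-P\operatorname{diag}(B)$, where $\operatorname{diag}(B)$ is the vector of diagonal entries. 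This fixes $u$, and hence $\col(\mW)=\col\bigl((\mI+\mathbf{1}u^\top)B\bigr)$.

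For comparison, the paper's proof is your ``first attempt'' taken at face value. It treats the column correlations of the centered within-sample rank matrix as the Spearman matrix $\rho_S$ and applies $2\sin(\pi\hat\rho_S/6)$. That is precisely the identification you distrusted, and by the invariance above its limit cannot be $\rho_S$ in general; it is typically not even of rank $d$. So you located the weak point correctly, but the proposal does not yet contain an argument that gets past it.
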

See \cref{sec:parameter-recovery-proofs} for the proof.

% \ag{I tried to differentiate between $\rho_S$ and $\hat\rho_S$. Perhaps it would be good to move the estimation of $\hat\rho$ to the end of the proof.}

The two key assumptions for this proof are Gaussian hidden states
% which connects $\rho_S$ and $\rho$,
and equal marginal variance for token logits.
% so \(\mW\) and \(\rho\) have the same column space.
The Gaussian assumption is not meant to be a literal prediction about hidden states; rather, we use it as a simplifying assumption and first-order approximation of the true distribution.
The equal marginal variance assumption intuitively means that every token's logits vary equally, depending on the context: the token ``the'' and the token ``helix'' may have different average probabilities, but their logits vary across contexts by about the same amount.
We check each of these assumptions empirically in \cref{fig:assump} and find that the Gaussian assumption is indeed a rough approximation, but the variance assumption holds remarkably well.

\begin{figure}
  \centering
  \includegraphics[valign=t]{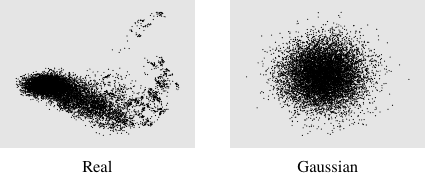}
  \hfill
  \includegraphics[valign=t]{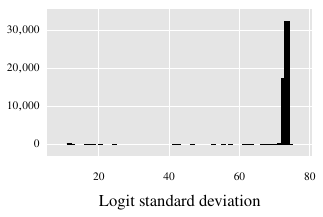}
  \caption{
    Checking assumptions for parameter stealing. Left: \num{10000} hidden states, projected onto their first two principal components.
    Real hidden states have a markedly different distribution to random Gaussian hidden states,
    making parameter stealing more difficult.
    Right: a histogram of the marginal standard deviations for each token, measured across \num{66000} real hidden states, shows a tight concentration for most tokens, with a small low-variance tail.
  }
  \label{fig:assump}
\end{figure}

The polynomial time complexity of the above parameter recovery method poses practical problems when \(v\) is large.
Fortunately, we can still approximately recover the column space when tokens have low Pearson correlations using singular value decomposition (SVD).

\begin{restatable}{theorem}{lowcorrrecovery}
  \label{thm:lowcorr-recovery}
  For a language model with unembedding matrix \(\mW\),
  whose final hidden states follow a Gaussian distribution with full-rank covariance,
  suppose all token logits have equal marginal variance and low pairwise correlation.
  Then the column space of \(\mW\) is approximately recoverable from \(n\) rankings
  as \(n\to\infty\),
  in \(\mathcal{O}(ndv)\) time.
\end{restatable}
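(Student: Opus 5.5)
We follow the route of \cref{thm:colspace-recovery}, but replace the exact $\mathcal{O}(v^3)$ eigendecomposition of a $v\times v$ matrix with a randomized, rank-$d$ sketch, and use the low-correlation assumption to control the error this introduces.

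\textbf{Setup.} Write $\Sigma=\Cov(\vh)$ and let $\mD=\mathrm{diag}(\mW\Sigma\mW^\top)^{1/2}$.
\begin{itemize}
\item By the equal-variance assumption, $\mD=\sigma\mI$.
\item The Pearson correlation matrix of the logits is therefore $\rho=\sigma^{-2}\mW\Sigma\mW^\top$.
\item Since $\Sigma$ is full rank, $\rho$ has rank $d$ and $\col\rho=\col\mW$.
\item For jointly Gaussian logits, the population Spearman correlation satisfies $\rho_S=\tfrac{6}{\pi}\arcsin(\rho/2)$ entrywise, exactly as in the proof of \cref{thm:colspace-recovery}.
\end{itemize}

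\textbf{Key step: the low-correlation linearization.} When all off-diagonal entries satisfy $|\rho_{ij}|\le\epsilon$, Taylor expansion gives $\tfrac{6}{\pi}\arcsin(x/2)=\tfrac{3}{\pi}x+\mathcal{O}(x^3)$. Hence
\[
\rho_S=\tfrac{3}{\pi}\rho+\bigl(1-\tfrac{3}{\pi}\bigr)\mI+E,\qquad |E_{ij}|=\mathcal{O}(\epsilon^3).
\]
The diagonal is exactly $1$ on both sides, so we treat the diagonal shift $(1-\tfrac3\pi)\mI$ separately. It is a multiple of the identity, so it changes no eigenvectors, only the eigenvalues. So $\rho_S$ equals a rank-$d$ PSD matrix whose column space is $\col\mW$, plus an identity shift, plus a small perturbation $E$. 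By Davis--Kahan, the top-$d$ eigenspace of $\rho_S$ approximates $\col\mW$, with angle bounded by $\|E\|_2/\lambda_d(\tfrac3\pi\rho)$. This bound is the sense of ``approximately'' in the statement. We bound $\|E\|_2\le v\,\mathcal{O}(\epsilon^3)$ crudely, or more tightly via the structure of $E$ as a Hadamard power $\rho^{\circ 3}$, whose rank is at most $\binom{d+2}{3}$.

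\textbf{Algorithm and cost.} Rather than forming the $v\times v$ matrix $\hat\rho_S$, we proceed as follows.
\begin{enumerate}
\item Convert each of the $n$ rankings to centered, normalized rank vectors and stack them into $R\in\R^{n\times v}$. This costs $\mathcal{O}(nv)$.
\item Note that $\hat\rho_S=\tfrac1n R^\top R$.
\item Compute the top-$d$ right singular subspace of $R$ by subspace iteration or a randomized SVD with $d$ (plus oversampling) vectors. Each multiplication by $R$ or $R^\top$ costs $\mathcal{O}(ndv)$, and a constant or logarithmic number of iterations suffices given the spectral gap.
\item Orthonormalize the result, at cost $\mathcal{O}(vd^2)\le\mathcal{O}(ndv)$ when $n\ge d$.
\end{enumerate}

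\textbf{Consistency.} As $n\to\infty$, the law of large numbers gives $\hat\rho_S\to\rho_S$. Continuity of the top-$d$ eigenspace, via Davis--Kahan again, then yields convergence to the approximate subspace identified above.

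\textbf{Main obstacle.} The hardest step is making ``approximately'' quantitative: bounding $\|E\|_2$ against the $d$-th eigenvalue of $\tfrac3\pi\rho$. The eigenvalues of $\rho$ sum to $v$ and it has rank $d$, so its typical eigenvalue is of order $v/d$. I would first try the Hadamard-power bound $\|\rho^{\circ3}\|_2\le\epsilon^2\|\rho\|_2$; this holds because the Schur product with the PSD matrix $\rho^{\circ2}$, whose diagonal entries are $1$, is a contraction. This gives a relative error of order $\epsilon^2\kappa(\rho|_{\col\mW})$, which is small under the low-correlation hypothesis.
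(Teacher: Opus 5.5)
Your main line is the paper's: a truncated SVD of the ranking matrix, Kruskal's formula $\rho_S=\tfrac{6}{\pi}\arcsin(\rho/2)$, linearizing $\arcsin$ under low correlation, and equal marginal variance to get $\col(\rho)=\col(\mW)$. You are more careful than the paper in two places.

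\begin{itemize}
\item The paper writes $\rho_S\approx\tfrac{3}{\pi}\rho$, which is off by $1-\tfrac{3}{\pi}$ on the diagonal. Recognizing this as an identity shift is the correct repair: eigenvectors are unchanged and the relevant gap is $\tfrac{3}{\pi}\lambda_d(\rho)$. Davis--Kahan then gives ``approximately'' a quantitative meaning the paper never supplies.
\item The top right singular vectors of the paper's row-centered $\mR$ are eigenvectors of $\mD_S\hat\rho_S\mD_S$, not of $\hat\rho_S$. Your identity $\hat\rho_S=\tfrac1n\mR^\top\mR$ holds only if you center and scale each token's column rather than each ranking. Doing so removes the paper's tacit assumption $\mD_S\propto\mI$.
\end{itemize}

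Like the paper, you treat the correlation of within-context token ranks, where $\mR$ ranks tokens against each other, as the across-context Spearman correlation. That is the quantity to which Kruskal's formula and the law of large numbers refer. This identification is inherited from the paper, not a new gap.

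The step that fails is the refinement in your last paragraph. For PSD $A$, Schur's bound is $\|A\circ B\|_2\le\max_i A_{ii}\,\|B\|_2$. Taking $A=\rho^{\circ2}$, whose diagonal is $1$, yields only $\|\rho^{\circ3}\|_2\le\|\rho\|_2$; the factor $\epsilon^2$ never appears. The claimed inequality $\|\rho^{\circ3}\|_2\le\epsilon^2\|\rho\|_2$ is in fact false. Since $\rho^{\circ3}$ is PSD with unit diagonal, $\|\rho^{\circ3}\|_2\ge 1$. But for $v=d+1$ logits in a regular-simplex configuration, $\rho=(1+\tfrac1d)\mI-\tfrac1d\mathbf{1}\mathbf{1}^\top$, one has $\epsilon=1/d$ and $\epsilon^2\|\rho\|_2=(d+1)/d^3$.

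To repair it, bound the zero-diagonal matrix instead, since identity shifts are free in Davis--Kahan. Write $\tfrac{6}{\pi}\arcsin(x/2)=\tfrac{3}{\pi}x+\sum_{k\ge1}a_kx^{2k+1}$ and $M=\rho-\mI$. Then your $E$ is exactly $\sum_{k\ge1}a_kM^{\circ(2k+1)}$, so it is not literally $\rho^{\circ3}$, and the $\binom{d+2}{3}$ rank bound covers only part of it. Replace the Schur contraction with the row-sum bound, using $(\rho^2)_{ii}\le\|\rho\|_2\,\rho_{ii}$:
$$\|M^{\circ3}\|_2\le\max_i\sum_{j\ne i}|\rho_{ij}|^3\le\epsilon\max_i\sum_{j\ne i}\rho_{ij}^2\le\epsilon(\|\rho\|_2-1).$$
The higher terms satisfy the same bound with $\epsilon^{2k-1}$. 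Hence $\|E\|_2=\mathcal{O}(\epsilon\|\rho\|_2)$, and the subspace angle is $\mathcal{O}(\epsilon\,\kappa)$. That is one power of $\epsilon$ weaker than you hoped, but rigorous. This bound, or even your crude $v\,\mathcal{O}(\epsilon^3)$ bound, is enough for the theorem as stated.
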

\begin{proof}[Proof sketch]
We estimate the column space of $\mW$ from observed rankings by computing two quantities that are related by a simple formula under our assumptions.
  For each pair of tokens we get an empirical estimate of Spearman correlation, $\hat\rho_S$ using SVD, which converges to the population Spearman correlation $\rho_S$ as $n \rightarrow \infty$.
Under the assumption that the final hidden states are normally distributed, $\rho_S$ and the Pearson correlation $\rho$ are connected via a simple formula.
  When we additionally assume the tokens have low Pearson correlation, this formula is well approximated by a simple scaling factor.
  While we cannot estimate $\mW$ exactly, by making an equal-marginal-variance assumption we can estimate the column span of $\mW$ using \(\rho\).
\end{proof}
See \cref{sec:parameter-recovery-proofs} for the proof.

The proof shows that, if the Gaussian, equal-variance, and low-correlation assumptions approximately hold,
taking the singular value decomposition of a sufficient number of rankings should reveal the hidden size (via a large drop in singular values at index \(d+1\)) and approximately recover the column space of the true unembedding matrix.
Concretely, to estimate the unembedding column space from rankings,
we gather a sample of rankings~\(r_1,r_2,\ldots,r_n\in S_v\) from the model,\footnote{$S_v$ being the set of permutations over $(1,2,\ldots,v)$.}
interpret them as vectors over \(\R^v\), 
mean-center them by subtracting \((v+1)/2\),
arrange them in a \(n\times v\) matrix \(\mR\),
and employ singular value decomposition to 
the ranking matrix to obtain \(\mR=\mU\boldsymbol\Sigma\mV^\top\)
where \(\boldsymbol\Sigma\in\R^{v\times v}\) is a diagonal matrix, and \(\mU\in\R^{n\times v}\) and \(\mV\in\R^{v\times v}\) have orthonormal columns.
We observe that the diagonal entries of \(\boldsymbol\Sigma\) have a large gap in magnitude immediately following entry \(d\), revealing the hidden size of the model.
Furthermore, under the idealized assumptions above, the first $d$ columns of \(\mV\) approximately span \(\col(\mW)\), 
so \(\mV\mA\) gives an aligned estimate of \(\mW\) for some \(\mA\in\R^{d\times d}\).
% \ag{they cannot be exactly identical, otherwise the colspace of W would have been recovered, and as a result the ranking pattern would have been recovered.}
% \mf{Good catch. identical \textrightarrow close.}
% up to a \(d\times d\) non-singular linear transformation.
This is precisely the method from \citet{carlini2024stealing},
but using rankings instead of logits. 
% \ag{I feel like a cool thing to do here is to translate one model to the other and measure kl divergence of the distributions.}
% \mf{Which distributions? Probabilities obtained by using the same hidden state on both?}

\begin{figure}
  \centering
  \small
\begin{tikzpicture}[baseline=(current bounding box.center)]
  \begin{groupplot}[
    group style={
      group size=1 by 2,
      vertical sep=2mm,
      xticklabels at=edge bottom,
    },
    width=0.35\textwidth,
    height=3cm,
    xmode=log,
    log basis x=2,
    enlarge x limits,
    ylabel style={rotate=-90, font=\footnotesize, align=right, text width=3cm},
    legend style={
      at={(0.5,1.03)},
      anchor=south,
      font=\tiny,
      fill=none,
    },
    legend columns=2,
  ]
    \nextgroupplot[
      ymin=0,
      ymax=100,
      ytick={0,50,100},
      ylabel={Column space alignment (\%, \textuparrow)},
    ]
    \addplot+[mark=*, mark size=1pt] table[
      col sep=comma,
      x=n_train, 
      y expr=100*\thisrow{align_centered},
    ] {results/task13/pythia70m_spectral_centered_sweep.csv};
    \addplot+[mark=square*, mark size=1pt] table[
      col sep=comma,
      x=n_train, 
      y expr=100*\thisrow{align_centered},
    ] {results/task13/pythia70m_real_hidden_sweep.csv};
    \draw ({2^9}, 76.6) -- node[anchor=north east, font=\tiny, at end, inner sep=1pt] {Baseline} ({2^16}, 76.6);
    \legend{Gaussian, Real}
    \nextgroupplot[
      ymode=log, 
      ylabel=Relative \mbox{Frobenius} norm error (\textdownarrow),
      xlabel=Ranking samples,
    ]
    \addplot+[mark=*, mark size=1pt] table[
      col sep=comma,
      x=n_train, 
      % y expr=100*\thisrow{fro_err_centered},
      y=fro_err_centered,
    ] {results/task13/pythia70m_spectral_centered_sweep.csv};
    \addplot+[mark=square*, mark size=1pt] table[
      col sep=comma,
      x=n_train, 
      % y expr=100*\thisrow{fro_err_centered},
      y=fro_err_centered,
    ] {results/task13/pythia70m_real_hidden_sweep.csv};
    \draw ({2^9},0.621) -- ({2^16},0.621);
\end{groupplot}
\end{tikzpicture}%
%
  \sisetup{
    table-format=+1.3, round-mode=places, round-precision=3
  }%
  \hfill
  \begin{tabular}{SSSS}
    {\(\mW\)} \\
-0.1136 & -0.1075 & -0.0933 & +0.1320 \\
-0.1310 & -0.1098 & -0.1124 & +0.1509 \\
+0.0391 & -0.0226 & +0.0260 & -0.0154 \\
-0.0180 & -0.0427 & +0.0270 & -0.0103 \\
    \\
    {\(\mV\)} \\
-0.1275 & -0.1143 & -0.1077 & +0.1464 \\
-0.1308 & -0.1095 & -0.1123 & +0.1505 \\
+0.0378 & -0.0226 & +0.0265 & -0.0143 \\
-0.0192 & -0.0420 & +0.0268 & -0.0096
  \end{tabular}
  \caption{
    Approximate parameter recovery from rankings on Pythia 70M as the number of training rankings increases.
    Rankings are obtained either from random Gaussian hidden states---an idealized attack---or real hidden states sampled from Pile documents.
    For a baseline, we compute the same metrics for the unembedding matrix of the Pythia 70M model trained on non-deduplicated Pile data.
    On the right, we show the northwest \(4\times4\) corner of the true matrix \(\mW\) and recovered matrix \(\mV\) at \(2^{16}\) Gaussian samples.
    % All signs and first significant digits are correct.
  }
  \label{fig:svd-approximation}
\end{figure}

We test the extent to which we can approximately steal the final layer~\(\mW\) of \texttt{pythia-70m-dedup}.
We use synthetic Gaussian-distributed random hidden states to get an upper bound on how well the attack works,
and actual hidden states (sampled from Pile documents) for a more realistic setting.
To evaluate final-layer recovery,
we compute the cosine alignment between the principal components
\(\mathbb{E}_{i=1,\ldots,d}[\cos(\theta_i)]\)
(where \(\theta_i\) is the angle between the \(i\)th principal components of recovered and true matrices),
and the relative Frobenius norm between the true and estimated matrix, i.e., 
\(\min_{\mA\in\R^{d\times d}}{\lVert\mW-\mV\mA\rVert_F}/{\lVert\mW\rVert_F}.\)
\Cref{fig:svd-approximation} shows that the attack works fairly well, even in the realistic setting, and improves with more ranking samples. 

While this method successfully extracts the approximate unembedding matrix, it is far too coarse to forge the ranking signature. 
Empirically, the feasible rankings from our best estimate \(\mV\) are still almost entirely disjoint from the rankings of \(\mW\), and we find that the feasibility boundary for random rankings falls in the top-$k=360$--$390$ range, 
more dissimilar than Pythia 70M's penultimate and final checkpoints in \cref{fig:boundaries}.
Though the parameters from \(\mV\) asymptotically approach the true parameters under our assumptions, the asymptotic behavior of the estimation error does not appear favorable:
the asymptotic decay on the log-log error plot for the Gaussian hidden states suggests that near-zero Frobenius error would require an exponential number of samples, if it converges to zero at all.
Furthermore, any deviance from the assumptions entails irreducible error in the estimation, 
and any small error in the weights produces a completely different signature. 
From a theoretical perspective these results are inevitable,
given the \ER-hardness of signature forgery.

\begin{figure}
  \begin{tikzpicture}
  \begin{groupplot}[
      group style={
        group size=1 by 2,
        vertical sep=8mm,
        x descriptions at=edge bottom,
        y descriptions at=all,
      },
      scaled ticks=false,
      ymode=log,
      xmode=log,
      log basis x=2,
      enlarge x limits=true,
      xmin=2,
      xmax=100352,
      width=0.9\textwidth,
      xlabel={Top-\(k\)},
      ylabel={Relative \mbox{Frobenius} error},
      label style={font=\footnotesize},
      title style={font=\footnotesize, yshift=-0.75ex},
      ylabel style={
        rotate=-90, anchor=east, text width=1.5cm, align=right,
      },
      legend style={font=\tiny},
    ]
    \nextgroupplot[
      title={Pythia 70M (\texttt{pythia-70m-dedup})},
      legend pos=south east,
      height=3cm,
    ]
    % \addplot[mark=*, mark size=1pt] table[x=k, y=fro_lstsq, col sep=comma] {results/task13/topk_fro_sweep_128x.csv};
    % \addplot[mark=*, mark size=1pt] table[x=k, y=fro_lstsq, col sep=comma] {results/task13/topk_fro_sweep.csv};
    \addplot+[mark size=1pt] table[x=k, y=fro_lstsq, col sep=comma] {results/task13/topk_fro_sweep_pile_hidden.csv};
    % node [pos=0.8, above=-1pt, sloped, font=\tiny] {$2^{14}$ samples};
    \addplot+[mark size=1pt] table[x=k, y=fro_lstsq, col sep=comma] {results/task13/topk_fro_sweep_pile_hidden_128x.csv};
    % node [at end, below, sloped, font=\tiny] {$2^{16}$ samples};
    % \addplot[mark=none] coordinates {(2,0.647) (50000,0.647)}
    \addplot[mark=none] coordinates {(2,0.621) (100352,0.621)}
    node[anchor=north west, font=\tiny, at start, inner sep=1pt] {\texttt{pythia-70m} similarity};
    \addplot[mark=none] coordinates {(370.5,1)  (370.5,0.1)}
    node[anchor=east, font=\tiny, near end, inner sep=1pt] {Median feasibility boundary (\texttt{pythia-70m})} coordinate (fbend);
    % \addplot[mark=none] coordinates {(435,1)  (435,0.05)}
    % node[anchor=east, font=\tiny, near end, inner sep=1pt, fill=black!10] {Max bound};
    \addplot[mark=none] coordinates {(512,1)  (512,0.1)}
    node[anchor=west, font=\tiny, near end, inner sep=1pt] {Hidden size} coordinate (hsend);
    % \draw[<->, shorten <=1pt, shorten >=1pt] (fbend) -- (hsend) node[right, font=\tiny] {} ;
    \legend{$2^{14}$ samples, $2^{16}$ samples}
    \nextgroupplot[
      title={OLMo 3 8B},
      legend pos=south west,
      height=3cm,
    ]
    \addplot+[mark size=1pt] table[x=k, y=fro_lstsq, col sep=comma] {results/task13/olmo_topk_fro_sweep.csv};
    \addplot[mark=none] coordinates {(2481,1)  (2481,0.1)}
    node[anchor=east, font=\tiny, midway, inner sep=1pt] {Feasibility boundary (\texttt{olmo-3-8b-instruct})};
    \addplot[mark=none] coordinates {(4096,1)  (4096,0.1)}
    node[anchor=west, font=\tiny, near end, inner sep=1pt] {Hidden size};
    \addplot+[mark=none] coordinates {(2,0.6215609425371542) (100352,0.6215609425371542)}
    node[anchor=north west, font=\tiny, at start, inner sep=1pt] {\texttt{olmo-3-8b-instruct} similarity};
    \legend{$2^{15}$ Gaussian samples}
  \end{groupplot}
\end{tikzpicture}
  \caption{
    Exposing the signature, but not the parameters. 
    Top: as top-\(k\) increases, the recovered matrix error for \texttt{pythia-70m-deduped} decreases.
    However, it only surpasses the similarity from a different model (\texttt{pythia-70m}) well after the value $k=370$ where 50/100 rankings are infeasible for that model.
    Thus, the API can expose the signature without exposing parameters roughly when \(370<k<\num{16000}\).
    Bottom: the same top-\(k\) attack on \texttt{olmo-3-8b} shows a similar tradeoff, with the recovered matrix remaining less similar to the target than the model-level baseline at the conservative setting \(k=d=4096\).
  }
  \label{fig:cross}
\end{figure}

Token rankings appear to be a double-edged sword:
they are both an unforgeable model signature,
and a security risk that could leak information about the model size and parameters. 
However, it may be possible to present the model signature without exposing parameters. 
In particular, \cref{fig:cross} shows that there is a range of values of \(k\) that is just the right size 
so that an API that gives only top-$k$ rankings 
presents the signature of \texttt{pythia-70m-dedup} 
but conceals its unembedding matrix.\footnote{
  To approximate the unembedding matrix from top-$k$ tokens,
  we apply the same method as for rankings,
  except the bottom tokens all get the same rank.
}
In particular, a conservative setting of \(k=d\) is sufficient to
differentiate the signature of \texttt{pythia-70m},
while the recovered parameters are less similar to the target model than the \texttt{pythia-70m} unembedding matrix.
The same qualitative tradeoff appears for \texttt{olmo-3-8b} in the second panel of \cref{fig:cross}. 
From an intuitive perspective, if all the top-$k$ rankings are feasible for \texttt{pythia-70m}, then its unembedding matrix cannot be ruled out as a candidate for the true matrix,
so it would make sense that $k$ would need to be beyond the feasibility boundary in order to accurately recover the true matrix.
Since only extremely similar models have feasibility boundaries beyond \(k=d\), an inference provider could reasonably choose $k=d$ as a limit for top-$k$ tokens, making $d$ known, but protecting the model weights and presenting the signature.

\section{Conclusion}

We identify model rankings as a new type of model signature---the first that does not require access to token probabilities.
Additionally, we prove theoretically that the ranking signature is unforgeable.
We also uncover a new parameter stealing attack for APIs that distribute token rankings, but show that the attack can be effectively countered by restricting outputs to top-$k$ rankings while preserving the ranking signature.
Our findings have practical implications for language model providers, and provide a new tool for model forensics.

Future work might improve upon the ranking signature by finding a geometric signature with stronger unforgeability guarantees.
Though \ER-hardness is a formidable obstacle, it is not known to be cryptographically secure.
The average case hardness of \ER{} problems is not currently known as far as we are aware.
As another direction, 
the ranking signature may be applicable for other types of models. 
While we focus exclusively on language models in this work,
all that is required for a ranking signature is a rank bottleneck at the output layer to restrict the set of feasible rankings. 
For instance, multilabel classifiers may have a similar unforgeable signature, since a rank bottleneck in the output sigmoid layer restricts the set of label combinations that the model can output, which has connections to the \ER-hard problem of oriented matroid realizability~\citep{Ziegler1995}.

% Another limitation of the ranking signature is that it is tied exclusively to the unembedding matrix of the model.
% Preserving the signature while changing the model is possible if the unembedding matrix stays constant.
% Thus, if the API rankings match the leaked model's signature in the scenario from \cref{sec:intro}, we can conclude that the leaker has access to the true model weights, but we still do not know whether the leaked weights are the same as the API model's.
% Future work might improve this method by finding geometric constraints that originate from different parts of the model,
% enabling in order to verify the whole model.

% \citet{lienkaemper2023} studied the rank of a matrix to which we apply a monotonic function to the whole matrix (\textit{underlying rank}), or a different monotonic function to each row of the matrix (\textit{monotone rank}).
% For the proof, see \citet[Theorems 4,5,6]{lienkaemper2023}, which follow a similar rationale to the one provided here.

% In the next section, we will argue in favor of an conjecture that is even stronger---that an attacker cannot feasibly deduce a description of the set of rankings from the model interface when \(D\) is sufficiently large. 

\section{Acknowledgements}

Matthew Finlayson is supported by a fellowship from the National Science Foundation (NSF) Graduate Research Fellowship Program. 
Andreas Grivas is supported by ERC grant ``Numerical Analysis for Stable AI'', 101198795.
Swabha Swayamdipta’s research is supported in part by the NSF under grant IIS2403437, the Simons Foundation, Apple, Intel and the Allen Institute for AI. Part of this research was done when Swabha Swayamdipta and Matthew Finlayson were visitors at the Simon’s Institute. Xiang Ren’s research is supported in part by the Office of the Director of National Intelligence (ODNI), Intelligence Advanced Research Projects Activity (IARPA), via the HIATUS Program contract \#2022-22072200006, the Defense Advanced Research Projects Agency with award HR00112220046, and NSF IIS 2048211. We would like to thank all the collaborators in INK and DILL research labs at USC and at the Simons Institute at UC Berkeley for their constructive feedback on the work, and Yanai Elazar for the question that originally sparked the project.

% \nocite{*}
% \printbibliography
\bibliography{main}
\bibliographystyle{colm2026_conference}

\appendix 
\crefalias{section}{appendix}

\section{Random rankings are super-exponentially unlikely to be feasible}
\label{sec:rankingcounting}

When a matrix~$\mW \in \R^{v \times d}$ is in general position (\ie any submatrix formed by $d$ rows have non-zero determinant), 
the number of feasible rankings is known and depends only on $v$ and $d$.
\begin{theorem*}[\citealp{Cover1967}]
  Consider the set of feasible rankings, $\argsortable{\mW} = \left\{\argsort \mW\vx: \vx \in \R^{d}\right\},\, \mW \in \R^{v \times d}$, where the rows of $\mW$ are in general position.
  We have:
  \begin{equation}
      \left| \argsortable{\mW} \right| = 2 \sum_{d'=0}^{d-1} S_{v, d'}
  \end{equation}
  Where:
  \begin{equation}
    S_{v, d} = \sum_{A\subseteq\{2,3,\ldots,v-1\}, |A|=d}\prod_{i \in A} i
  \end{equation}
  i.e., $S_{v,d}$ is the sum of all products of $d$-element subsets of $\{2,3,\ldots,v-1\}$. 
\end{theorem*}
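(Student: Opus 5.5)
The plan is to identify feasible rankings with the chambers of a central hyperplane arrangement and count them with a deletion–restriction recursion. For rows \(\vw_1,\ldots,\vw_v\) of \(\mW\), the ranking \(\argsort(\mW\vx)\) changes only when \(\vx\) crosses one of the hyperplanes \(H_{ij}=\{\vx:(\vw_i-\vw_j)^\top\vx=0\}\), \(i<j\). Off these hyperplanes all logits are distinct, and each open region gives exactly one ranking. Distinct regions give distinct rankings, because two regions are separated by some \(H_{ij}\), on whose two sides the pair \(i,j\) is ordered oppositely. Hence \(|\argsortable{\mW}|\) equals the number \(N(v,d)\) of regions of the arrangement \(\{H_{ij}\}\). (Rankings with ties occur only on measure-zero sets; I would discard them, as the statement implicitly does.)

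Next I would turn the closed formula into a recursion. Since \(S_{v,d}\) is the degree-\(d\) coefficient of \(\prod_{i=2}^{v-1}(1+it)\), multiplying by \((1+vt)\) gives
\[
S_{v+1,d}=S_{v,d}+v\,S_{v,d-1}.
\]
Summing over \(d'<d\) and doubling, the right-hand side \(F(v,d)=2\sum_{d'<d}S_{v,d'}\) satisfies
\[
F(v+1,d)=F(v,d)+v\,F(v,d-1).
\]
The boundary values are \(F(v,1)=2\) and \(F(2,d)=2\) for \(d\ge1\). These match the geometry: in \(d=1\) with distinct rows, only \(x>0\) and \(x<0\) give different rankings, and with \(v=2\) there is a single hyperplane. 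So it suffices to prove \(N(v+1,d)=N(v,d)+v\,N(v,d-1)\) and to induct on \(v+d\).

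For the inductive step, add the new token \(v+1\) with row \(\vw\) by inserting the hyperplanes \(H_{1,v+1},\ldots,H_{v,v+1}\) one at a time into the arrangement for tokens \(1,\ldots,v\), which has \(N(v,d)\) regions. A standard fact is that adding a hyperplane \(H\) increases the region count by the number of regions that the current arrangement induces on \(H\). On \(H_{i,v+1}\) we have \(\vw^\top\vx=\vw_i^\top\vx\). So every previously added \(H_{j,v+1}\) with \(j<i\) restricts to the same set as \(H_{ij}\), which is already present, and the induced arrangement on \(H_{i,v+1}\cong\R^{d-1}\) is exactly the ranking arrangement of the \(v\) old tokens. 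Concretely, that arrangement comes from the rows \(\vw_1,\ldots,\vw_v\) projected to the quotient \(\R^d/\mathrm{span}(\vw-\vw_i)\), whose dual is \(H_{i,v+1}\). By induction each of the \(v\) insertions contributes \(N(v,d-1)\) regions, which gives the recursion.

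The main obstacle is the genericity hypothesis. The induction needs the projected rows in each restriction to again be in "general position" in \(d-1\) dimensions, so that the count depends only on \((v,d-1)\). It also needs no unexpected coincidences among the \(H_{ij}\) restricted to \(H_{i,v+1}\), for example two different \(H_{jk}\) restricting to the same hyperplane, or a triple intersection of lower dimension than expected. Plain linear independence of every \(d\) rows does not obviously guarantee this. I would therefore state the hypothesis in Cover's sense: the difference vectors \(\vw_i-\vw_j\) are in generic position apart from the forced linear relations \((\vw_i-\vw_j)+(\vw_j-\vw_k)=\vw_i-\vw_k\). I would then check that this condition is inherited under projection along \(\vw-\vw_i\). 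Since it is an open dense condition, it can be verified for generic \(\mW\) by a dimension count, and that is the version I would prove.
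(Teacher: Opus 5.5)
Your argument is correct and follows the route the paper points to: the paper gives no proof of its own, only the citation to Cover and the recurrence $Q(v,d)=Q(v-1,d)+(v-1)Q(v-1,d-1)$, which is exactly what your deletion--restriction step establishes, with the closed form matched through $\prod_{i=2}^{v-1}(1+it)$. Your stronger genericity hypothesis is actually necessary, not just convenient: the rows $(1,0),(1,1),(1,2)$ satisfy the paper's condition (every $d$ rows linearly independent), yet $\ell_2-\ell_1=\ell_3-\ell_2=x_2$, so they give only two strict rankings instead of the predicted $2(1+2)=6$.
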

  See~\citet{Good1977} for the affine case.

% \ag{I think we need to define $S_{v, 0} = 1$? We know that $Q(v, 2) = 2$, which by the formula is $2 S_{v, 0}$.}
% \mf{I think this follows from the definition: there is only 1 0-element subset of $\{2,3,\ldots,v-1\}$, and the empty product is vacuously equal to 1.}

The number of rankings an LM parametrised by $\mW$ has~\(Q(v,d)\) can also be expressed as a recurrence relation \(Q(v, d) = Q(v-1,d) + (v-1)Q(v-1,d-1)\).
This grows astronomically large very quickly, with \(Q(32000,400)=\num{4.4e2608}\).
However, this is miniscule compared to the number of \emph{possible} rankings which is \(v!\) which is on the order of \num{e130271}, obtained using Stirling's approximation.
\Cref{fig:overlap_prob} shows how the probability of an LM having any particular ranking (\(Q(v,d)/v!\)) drops quickly to near 0 as soon as $d<v$.
In fact, the probability of having a particular ranking shrinks super-exponentially as the size of the LM increases (when \(v\geq 2d\), as is generally the case for language models), as we will now show in \cref{thm:argsortable_prob_upper_bound}.

% \ag{Maybe worth explicitly motivating the choice of $v > 2d$, which presumably comes from characteristics of unembedding matrices in subword-level LLMs?}
% \mf{Done}

\begin{figure}
  \centering
  \begin{tikzpicture}[font=\small]
    \begin{axis}[
        colormap/redyellow,
        view={-20}{30},
        xlabel=\(v\), ylabel=\(d\), zlabel=Probability,
        zmin=0, zmax=1,
        ytick={10, 20},
        height=4cm,
        width=\textwidth,
      ]
      \addplot3 [
        surf,
        mesh/ordering=y varies,
        shader=faceted,
        opacity=0.5
      ] table {data/overlap_prob.dat};
    \end{axis}
  \end{tikzpicture}
  \caption{
    The probability that a randomly selected ranking can be output by a language model with hidden size \(d\) and vocab size \(v\).
  }
  \label{fig:overlap_prob}
\end{figure}

\rankingprobability*

\begin{proof}
  The probability of a random permutation (chosen uniformly at random) being argsortable by a general position matrix is equal to the number of argsorts the matrix has divided by the total number of permutations. Since we consider $v \geq 2d$,
  this probability is greatest when \(v=2d\), as \(Q\) is monotonic in \(d\).
Thus we have
\begin{align}
  \frac{|\mathcal{B}(\mW)|}{(2d)!} &= \frac{2}{(2d)!}\sum_{d=0}^{d-1}\sum_{A\subseteq\{2,3,\ldots,2d-1\},|A|=d}\prod_{i\in A}i \\
  &< \frac{2}{(2d)!}\sum_{d=0}^{d-1}\binom{2d-2}{d}\frac{(2d-1)!}{(2d-1-d)!} \\
  &= \frac{2}{(2d)!}\sum_{d=0}^{d-1}\frac{(2d-2)!}{d!(2d-2-d)!}\frac{(2d-1)!}{(2d-1-d)!} \\
  &< \frac{2d(2d-2)!(2d-1)!}{(2d)!(d-1)!(2d-1-d)!(2d-d)!} & \text{When \(v>2d\)}\\
  &= \frac{d^2(2d-2)!}{d!^3} < \frac{d^2(2d)!}{d!^3} \\
  &< \frac{d^2e(2d)^{2d+1/2}e^{-2d}}{\left(\sqrt{2\pi}d^{d+1/2}e^{-d}\right)^3} & \text{Stirling's} \\
  &= \frac{de^{d+1}2^{2d}}{(2\pi)^{3/2}d^d} < \frac{32^{d}}{d^d}
\end{align}
which means that the probability shrinks super-exponentially as \(d\) (and therefore \(v\)) grows.
\end{proof}

\section{Proofs for parameter recovery from rankings}
\label{sec:parameter-recovery-proofs}

\colspacerecovery*

\begin{proof}
  Let $\mR=[r_1,r_2,\ldots,r_n]^\top \in \R^{n \times v}$ be a matrix of mean-centered rankings from 
  the model. 
  Singular value decomposition gives us \(\mR=\mU\Sigma\mV^\top\) in \(\mathcal{O}(nv^2)\) time. 
  From that, we can get an empirical estimate of the Spearman correlation \(\hat\rho_S=\mD_S^{-1}\mV\Sigma^2\mV^\top\mD_S^{-1}\) with entries indexed by pairs of tokens, where \(\mD_S\) is a diagonal matrix that normalizes the diagonal entries of \(\hat\rho_S\) to be 1.
  This multiplication to obtain \(\hat\rho_S\) can be done in \(\mathcal{O}(v^3)\) time.
  Because the hidden states are normally distributed, the logits are too, since they are a linear transformation of the hidden states.
  We can therefore use the classical result~\citep[Equation 6.4]{kruskal1958}
  % \ag{Is there a better ref?}
  % \mf{IDK man, Kruskal doesn't give a reference, says it's "easy to derive" lol.}
  that the Pearson correlation for the logits \(\rho\) is \(2\sin(\pi\rho_S/6)\) (applied element-wise).
  % \ag{Adjust order of steps and move following line to end of proof?} 
  % \mf{looks fine to me here}
  We use the estimator $\hat\rho \coloneqq 2\sin(\pi\hat\rho_S/6)$ which converges to $\rho$ as \(n\to\infty\).
  By definition, the Pearson correlation matrix is \(\rho=\mD^{-1}\mW\Cov(\vh)\mW^T\mD^{-1}\) 
  where \(\Cov(\vh)\) is the covariance matrix for the hidden states,
  and \(\mD\) is diagonal, chosen to normalize the diagonal of \(\rho\) to 1.
  Therefore, \(\rho\) has rank \(d\),
  and the eigendecomposition \(\hat\rho=\mQ\Lambda\mQ^\top\)---obtained 
  in \(\mathcal{O}(v^3)\) time---will have \(d\) non-zero eigenvalues as \(n\to\infty\).
  By the equal-marginal-variance assumption, \(\mD=c\mI\) for some scalar \(c>0\).
  The corresponding \(d\) columns of \(\mQ\) therefore span \(\col(\mD^{-1}\mW)=\col(\mW)\).
\end{proof}

\lowcorrrecovery*

\begin{proof}
  Let $\mR=[r_1,r_2,\ldots,r_n]^\top \in \R^{n \times v}$ be a matrix of mean-centered rankings from 
  the model. 
  SVD gives us \(\mR=\mU\Sigma\mV^\top\) 
  for the first \(d\) singular values
  in \(\mathcal{O}(ndv)\) time. 
  We will show that the first $d$ columns of \(\mV\) approximately span \(\col(\mW)\) as \(n\to\infty\).
  As in the proof of~\cref{thm:colspace-recovery}, we know that \(\hat\rho_S=\mD_S^{-1}\mV\Sigma^2\mV^\top\mD_S^{-1}\) for a diagonal-normalizing \(\mD_S\), and \(\rho_S=\frac{6}{\pi}\arcsin(\rho/2)\).
  When the Pearson correlations \(\rho\) are small,
  \(\arcsin(\rho/2)\approx\rho/2\),
  so we have \(\rho_S\approx\frac{3}{\pi}\rho=\frac{3}{\pi}\mD^{-1}\mW\Cov(\vh)\mW^\top\mD^{-1}\).
  By the equal-marginal-variance assumption, \(\mD=c\mI\), so \(\col(\rho)=\col(\mW)\).
  Thus, the top \(d\) right singular vectors of the ranking matrix approximately span \(\col(\mW)\).
\end{proof}

\begin{figure}
  \centering
    \begin{tikzpicture}[font=\small]
    \begin{axis}[
        ymin=0, ymax=1,
        xmin=2, xmax=100,
        ytick={0,0.5,1},
        ymajorgrids=false,
        width=0.9\textwidth, height=3cm,
        clip=false,
        xlabel={Top-\(k\)},
        ylabel=Probability,
        ylabel style={rotate=-90}
      ]
      \addplot[mark=*, mark size=1pt, thick] table {data/phase_change.dat};
      \addplot[draw=none, name path=lower] table[y index=3] {data/phase_change.dat};
      \addplot[draw=none, name path=upper] table[y index=4] {data/phase_change.dat};
      \addplot[opacity=0.1] fill between [of=lower and upper];
    \end{axis}
  \end{tikzpicture}
  \caption{
    Probability that a random top-\(k\) ranking is feasible for random \(\num{10 000}\times\num{100}\) matrices sampled from the standard normal distribution~\(\mathcal{N}(0,1)\). 
    The shaded region indicates the 95\% confidence interval, estimated from 30 sampled matrices for each top-\(k\) size.
    The data show a phase-change between \(k=9\) and \(k=16\), where the matrices go from high to low coverage of top-\(k\) argsorts.
  }
  \label{fig:randomtopk}
\end{figure}

\end{document}